\DeclareMathOperator{\E}{\mathbb{E}}
\DeclareMathOperator{\Pro}{\prod_{c \in \Phi}}
\DeclareMathOperator{\rcst}{\sigma_{t_{avg}}}
\DeclareMathOperator{\rcsc}{\sigma_{c_{avg}}}
\DeclareMathOperator{\Ei}{\underset{\sigma_c}{\E}}
\DeclareMathOperator{\Eii}{\underset{\sigma_c,G_c}{\E}}
\DeclareMathOperator{\Eiii}{\underset{\sigma_c,G_c,\Phi}{\E}}
\DeclareMathOperator{\Noise}{N_s}
\DeclareMathOperator{\hxt}{\mathcal{H}(r_t)}
\DeclareMathOperator{\hxc}{\mathcal{H}(R_c)}
\DeclareMathOperator{\hxcs}{\mathcal{H}(r_c)}
\newtheorem{theorem}{Theorem}[]
\newtheorem{corollary}{Corollary}[theorem]
\newtheorem{definition}{Definition}
\begin{document}
\title{Optimization of Radar Parameters for Maximum Detection Probability Under Generalized Discrete Clutter Conditions Using Stochastic Geometry}
\author{Shobha~Sundar~Ram,~\IEEEmembership{Senior Member, IEEE}, Gaurav~Singh, and Gourab~Ghatak,~\IEEEmembership{Member, IEEE}
\thanks{S.S.R., G. S and G. G are with the Indraprastha Institute of Information Technology Delhi, New Delhi 110020 India. E-mail: \{shobha, gaurav18160, gourab.ghatak\}@iiitd.ac.in.}%
}
\maketitle

\begin{abstract}
We propose an analytical framework based on stochastic geometry (SG) formulations to estimate a radar's detection performance under generalized discrete clutter conditions. We model the spatial distribution of discrete clutter scatterers as a homogeneous Poisson point process and the radar cross-section of each extended scatterer as a random variable of the Weibull distribution. Using this framework, we derive a metric called the radar detection coverage probability as a function of radar parameters such as transmitted power, system noise temperature and radar bandwidth; and clutter parameters such as clutter density and mean clutter cross-section.
We derive the optimum radar bandwidth for maximizing this metric under noisy and cluttered conditions. We also derive the peak transmitted power beyond which there will be no discernible improvement in radar detection performance due to clutter limited conditions. When both transmitted power and bandwidth are fixed, we show how the detection threshold can be optimized for best performance.
We experimentally validate the SG results with a hybrid of Monte Carlo and full wave electromagnetic solver based simulations using finite difference time domain (FDTD) techniques. 
\end{abstract}

\providecommand{\keywords}[1]{\textbf{\textit{Keywords--}}#1}
\begin{IEEEkeywords}
stochastic geometry, radar detection, FDTD, Monte Carlo simulations, indoor clutter, Poisson point process
\end{IEEEkeywords}

\section{Introduction}
\label{sec:Intro}
Clutter models serve as predictive tools for setting detection thresholds and calibrating a radar's performance during operation \cite{skolnik1990radar}.
Empirical models of clutter from land terrains, sea and precipitation have been generated, over the decades, using large volumes of high quality measurement data \cite{long1975radar,4102550,barton1985land}. These data are typically laborious and time consuming to gather. Further, the computation of the statistical properties - such as higher order moments - from the data may be challenging \cite{lampropoulos1999high,ward2006sea}.  An alternate approach would be to use computational tools to model clutter returns \cite{rangaswamy1995computer, conte1987characterisation}. Full wave electromagnetic solvers are deterministic and computationally expensive especially at microwave and millimeter wave frequencies. Reliable characterization of radar detection metrics would require extensive trials of full wave simulations to capture the stochastic nature of the target and clutter conditions. In particular, we need to consider multiple target and clutter deployments and, for each of such realization, several clutter field-of-view and cross-section instances must be considered. Moreover, this procedure would have to be repeated for every set of possible radar parameters, e.g., bandwidth and center frequency. Therefore, such system level simulations are rarely used to quantify radar detection characteristics. In our paper, we propose the use of stochastic geometry (SG) to provide analytical expressions to quantitatively estimate a radar's detection performance.

In recent years, SG has been explored for a variety of applications~\cite{chiu2013stochastic}. For example, SG has been extensively studied in the context of different types of communication systems including cellular networks \cite{andrews2011tractable,bai2014coverage}, millimeter wave systems \cite{thornburg2016performance,ghatak2018coverage}, MIMO systems \cite{zia2007information,cui2012delay}, and vehicular networks \cite{beygi2015nested}. The framework has been exploited for optimizing communication parameters such as data rates under the given signal to interference ratio when multiple transmitters (such as base stations) and receivers (mobile end users) coexist in the channel. Such an analytical characterization is especially useful in capturing the diversity in the distribution and strength of the base station signals in the region of interest.

More recently, SG-based analysis has been extended to radar scenarios \cite{al2017stochastic,munari2018stochastic,ren2018performance,park2018analysis,fang2020stochastic}. In \cite{al2017stochastic,fang2020stochastic}, the authors modeled the distribution of the vehicles, mounted with automotive radars, on the roads as a Poisson point process, while in \cite{munari2018stochastic, park2018analysis}, the distribution of pulsed-radar sensors were modeled as a Poisson point process. In all of the above works, SG techniques are used to characterize the interference characteristics from multiple radars and analytically derive the signal to interference and noise ratio (SINR). The radar detection performance was then quantified on the basis of whether the SINR was above a pre-determined threshold. In this paper, we model the locations of discrete clutter scatterers as a Poisson point process similar to \cite{chen2012integrated}. Then, by employing tools from SG, we derive the signal to clutter and noise ratio (SCNR) in the region of interest.
The radar detection performance is subsequently quantified based on whether the SCNR exceeds a pre-determined threshold. Note that this formulation of the radar detection performance (and those of the prior SG-based radar works) is fundamentally distinct from classical radar detection frameworks, such as the Neyman-Pearson approach, where likelihood ratio tests between alternative and null hypotheses are performed to generate radar operating curves. In classical radar detection theory, the objective is usually to derive the probability density functions (pdf) for signal plus noise and clutter (or interference) - the  alternative hypothesis - and just noise and clutter (or interference)  - the null hypothesis \cite{kay1993fundamentals}. In contrast, our SG-based radar work, characterizes the pdf of the SCNR directly. The main advantage is that SG-based analysis characterizes the radar performance for all possible spatial configurations of network entities (in this case, discrete clutter scatterers) without the need for extensive simulations. Additionally, under suitable assumptions, such an analysis leads to tractable analytical results which enables the derivation of system-design insights that are often missed by system-level simulations~\cite{di2015computational}.

Two typical scenarios where discrete clutter are encountered are indoor radars \cite{amin2017radar} and foliage penetration radars \cite{davis2011foliage}. Both these radars typically operate at carrier frequencies below the X-band in order to facilitate non-line-of-sight (NLOS) spotting of humans. Indoor radars encounter significant discrete clutter scatterers such as furniture, walls, ceilings and floors \cite{bufler2016radariet}. Each of these discrete scatterers consist of multiple scattering centers with significant aspect angle variation. Many different algorithms and strategies have been proposed for mitigating both static and dynamic indoor clutter \cite{yoon2009spatial,solimene2013front,vishwakarma2017detection,vishwakarma2020mitigation}. All of these studies focus on specific room layouts and wall geometries. However, across different indoor environments, the scatterers may show considerable variations in their spatial distributions and sizes. Hence these radars may encounter considerable variations in the clutter returns during actual deployments. Similarly, in foliage penetration radar, returns from trees give rise to significant clutter \cite{davis2011foliage}. Again, there can be considerable spatial randomness in the tree's girth and distribution. In both the cases, the radar performance is based on the clutter characteristics such as its distribution, the radar cross-section (RCS) of each clutter scatterer as well as the interference between the scatterers. Similarly, variations in the target RCS and position can also affect the radar detection performance. The radar operating curves, in these scenarios, is directly a function of the SCNR which is a random function of both target and clutter parameters. In our preliminary work in \cite{ram2020estimating}, we proposed a metric termed as the radar detection coverage probability ($P_{DC}$) that approximates the fraction of the locations of a target in the region of interest where SCNR is above a predefined threshold. In that paper, we assumed an exponential distribution for the RCS of the discrete clutter and derived $P_{DC}$ for both line-of-sight (LOS) and NLOS conditions of the target. Radar clutter have been modeled using Rayleigh, log-normal or Weibull distributions in prior works \cite{goldstein1973false,schleher1976radar,conte2004statistical}. The log-normal model is typically used when the radar sees land clutter or sea clutter at low grazing angles, while the Rayleigh model is used when the amplitude probability distribution of the clutter is of a limited dynamic range. However, both of these models provide limiting distributions for most experimentally measured clutter. The Weibull model, on the other hand, provides a much more generalized model for radar clutter and is adopted in this work to model the returns from extended scatterers. We also assume that the noise, target and clutter statistics do not change appreciably during the coherent processing interval of the radar. These conditions are generally met for microwave or millimeter-wave radars. 
Our contributions in this paper are summarised below:
\begin{enumerate}
    \item We model discrete clutter scatterers as a homogeneous Poisson point process. We provide a comprehensive analysis of $P_{DC}$ under generalized clutter conditions modeled using the Weibull function. Here, the exponential and Rayleigh clutter models are just two specific cases of the generalized Weibull clutter model. $P_{DC}$ also considers diversity in the RCS of the radar target, the path loss function, the clutter density and the mean RCS of the clutter scatterers. The theorems, we offer, provide physics based insights into the radar performance with respect to different  radar, target and clutter parameters instead of  laborious measurement experiments and computationally complex simulation studies.
    \item The clutter returns that significantly affect the system performance arise from the same range cell occupied by the target. Greater bandwidth giving rise to small range cells result in weaker clutter returns.  Noise, on the other hand, increases proportionately to bandwidth. We provide a tractable method for optimizing the radar bandwidth for maximizing the radar detection coverage probability under discrete clutter and Gaussian noise conditions. The optimum radar bandwidth, derived from our theorem, will increase the likelihood of a target being detected across the entire region of interest.
    \item Under noise limited conditions, increase in transmitted power results in improved radar detection. However, in clutter limited scenarios, there is no further improvement in the radar's performance due to increased returns from the clutter scatterers. We provide an analytical solution for optimizing the radar transmitted power. When both the radar power and bandwidth are fixed, we provide a method to adjust the detection threshold.
    \item Finally, all of the existing SG works have validated the results using Monte Carlo simulations. In this work, we use a hybrid of finite difference time domain (FDTD) techniques based on full wave electromagnetic solvers and Monte Carlo simulations to validate our SG results. To the best of our knowledge, our work is the first to present an experimental validation of SG results with actual electromagnetic modeling.
\end{enumerate}
The paper is organized as follows. In Section \ref{sec:Theory}, we present two theorems. The first theorem derives the $P_{DC}$ using SG formulations while the second theorem derives the optimum radar bandwidth for maximizing $P_{DC}$. In the following Section \ref{sec:ExpValidation}, we describe the experimental validation using a hybrid of FDTD and Monte Carlo simulations. Finally, in Section \ref{sec:Results}, we present the key results and analyses regarding the radar detection performance as a function of radar parameters - such as radar bandwidth, transmitted power and receiver system noise - target parameters (target position, mean target RCS) and clutter parameters (mean clutter RCS, clutter density and type of distribution). The scientific inferences from our studies are summarized in the last section.

\emph{Notation} We use the following notation in the paper. We denote random variables by upper-case letters and their realizations or other deterministic quantities by lower-case letters. We use the bold font to denote vectors and normal font to denote scalar quantities.

\section{Theory}
\label{sec:Theory}
Consider a monostatic radar of wavelength $\lambda_c$ located at the origin of a two dimensional circular space and shown as a triangle in Fig. \ref{fig:ProblemSetup}. 
\begin{figure}
    \centering
    \includegraphics[width=2.5in,height=2.5in]{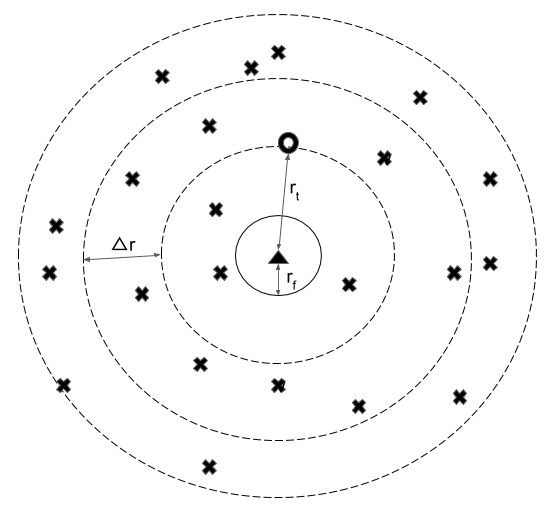}
    \caption{In the stochastic geometry formulation shown above, the monostatic radar (indicated as a triangle) is at the center of the region of interest with a single target (indicated as a circle) at a distance $r_t$ from the radar. Discrete clutter scatterers (indicated as crosses) are modelled as a homogeneous Poisson point process (PPP). They are distributed uniformly in the region of interest except within the far-field radius of the radar ($r_f$).}
    \label{fig:ProblemSetup}
\end{figure}
The transmitted power from the radar is $P_{tx}$. We assume that the radar consists of directional transmitting and receiving antennas of gain $G_{tx}(\theta)$ and $G_{rx}(\theta)$ respectively where $\theta$ is the direction-of-arrival of a direct path signal from a point scatterer (target or clutter). Noise at the radar receiver is assumed to be white Gaussian and is given by $\Noise = K_BT_sBW$ where $K_B$ is the Boltzmann constant, $T_s$ is the system noise temperature and $BW$ is the radar bandwidth. The bandwidth of the radar also determines the radar range resolution cell size given by $\Delta r = \frac{c}{2BW}$ where $c$ is the speed of light.

A single target, indicated as a circle in the figure, is assumed to be present in the channel at a fixed position $x_t = (r_t,\theta_t)$. Here, the target's Euclidean distance from the radar, $r_t$, can range from $r_t \in (r_f,\infty]$ where $r_f$ is the Fraunhofer far-field distance of the radar antennas.
The target azimuth, $\theta_t $, is fixed such that the target is assumed to be within the main lobe of the radar where $G(\theta_t)=G_{tx}(\theta_t)G_{rx}(\theta_t)=1$. The target radar cross-section ($\sigma_t$) is a random variable described by the Swerling 1 model with average RCS of $\rcst$ as shown in
\begin{align}
\label{eq:TargetRCS}
  P(\sigma_t) = \frac{1}{\rcst}exp \left(\frac{-\sigma_t}{\rcst} \right).
\end{align}
This corresponds to the case of a target composed of several scatterers of approximately equal reflectivities (like a human). 
Based on the radar range equation, the received signal at the radar for a single pulse is:
\begin{align}
\label{eq:RxPowSig}
    S(r_t) =  P_{tx}\sigma_t\hxt.
\end{align}
Here $\hxt$ is the path loss function which is a function of the distance between the radar and target as shown below
\begin{align}
\label{eq:PathLossFunc}
    \mathcal{H}(r_t)=PL_f\left(\frac{r_f}{r_t}\right)^{2q},
\end{align}
where $q$ is the path loss coefficient and $PL_f$ is the path loss factor at $r_f$.

Next, we discuss the clutter characteristics. We assume that there are discrete point scatterers that constitute the clutter (shown as crosses in the figure). We also assume that the coherent processing interval of the radar is short compared to the time required for the clutter statistics to change. These conditions are generally met for microwave or millimeter-wave radars. 
These clutter point scatterers are randomly distributed in the two-dimensional space. The locations of the random clutter scatterers are modeled as a homogeneous Poisson point process (PPP),  $\Phi$. The number of scatterers in a closed compact set $A$ is denoted by the intensity measure $\rho \nu(A)$ where $\rho$ is the clutter density and $\nu(A)$ is the size of the illuminated area, $A$.
Each realization of $\Phi$ is denoted by $\phi$ and the location of each clutter point is a random vector, $\mathbf{X_c}=(R_c,\theta_c)$. Again, the clutter points are assumed to be a minimum far-field distance $r_f$ away from the radar. The number of point clutter in each realization follows the Poisson's distribution while their distribution is assumed to be uniform within the region of interest from $[r_f,\infty)$ and $0 \leq \theta_c \leq 2\pi$. However, we only focus on the returns from those clutter points that lie within the same range resolution cell as the target (when $r_t-\frac{\Delta r}{2}<R_c<r_t+\frac{\Delta r}{2}$) since these are the returns that are most likely to impact the detection. Therefore the mean number of clutter scatterers within this range cell is given by $\rho 2 \pi r_t \Delta r$.
The radar signal reaching each clutter scatterer is affected by the path loss function based on a slow fading model denoted as $\hxc$. Each discrete clutter scatterer is considered an extended scatterer and modelled to have a fluctuating RCS ($\sigma_c$) based on the Weibull model with an average RCS of $\rcsc$, as shown in
\begin{align}
\label{eq:ClutterRCS}
 P(\sigma_c) = \frac{\alpha}{\rcsc}\left(\frac{\sigma_c}{\rcsc}\right)^{\alpha-1}\exp\left(-\left(\frac{\sigma_c}{\rcsc}\right)^{\alpha} \right),
\end{align}
where $\alpha$ is the shape parameter.
When $\alpha = 1$ the above expression reduces to the exponential probability distribution \eqref{eq:ClutterRCS}. When $\alpha = 2$, the above expression shows the Rayleigh probability distribution. 
Thus the total clutter returns at the radar receiver, for each realization of the PPP, depends on the clutter points within the range resolution cell of the target and is given by
\begin{align}
\label{eq:TotalClutter}
  C = \sum\limits_{c\in \phi, c \in r_t-\frac{\Delta r}{2}, r_t+ \frac{\Delta r}{2}}P_{tx} G(\theta_c)G_c\sigma_c \hxc.
\end{align}
In the above equation, the interference between the clutter returns from the individual point scatterers is captured by $G_c$. 

There can be significant variations in the clutter returns due to variation in the number, the distribution and fluctuations in RCS of the extended clutter scatterer. Similarly, the target returns may vary due to fluctuations in the target RCS.
Therefore, the \emph{mean} signal to clutter and noise ratio at the radar for a target at a given  $r_t$ is
\begin{align}
\label{eq:SCNR}
    SCNR(r_t) =\Eiii\left[\frac{P_{tx} \sigma_t\hxt}{\sum\limits_{c\in \phi}P_{tx} G(\theta_c)G_c\sigma_c \hxc +\Noise} \right].
\end{align}
Classical radar detection theory considers the radar operating curves derived from the probability of detection ($P_D$) and probability of false alarm ($P_{FA}$). There are many works in literature that have proposed approximations to the relationships between $P_D$, $P_{FA}$ and SNR. However, as the scenario becomes more complex, with a large number of discrete clutter scatterers with considerable variation in their spatial distribution and cross-sections, the relationship between $P_D$, $P_{FA}$ and SCNR becomes harder to derive analytically. Instead, we propose an alternative and simpler metric called the \emph{radar detection coverage probability} ($P_{DC}$) based on the mean SCNR.  Thus $P_{DC}$ is distinct from both $P_D$ and $P_{FA}$ used in classical radar and indirectly includes the effect of both detections and false alarms.
The metric is analogous to the \emph{cellular coverage probability} in wireless communications. There the metric is defined as the probability that a mobile user at any particular position in the coverage area will experience a signal to interference and noise ratio above a predefined threshold. The metric provides a method for evaluating the network performance while trading off between the benefits of increased capacity with greater density of mobile base stations with the performance issues that arise due to interference between the base stations. The metric also provides a method for optimizing some of the cellular parameters (such as data rate) for a given SINR. In the case of radar, greater bandwidth results in a smaller range resolution cell. If we consider the clutter that arises from the same cell as the target, then a smaller range cell results in reduced clutter. However, higher bandwidth also results in greater system noise at the radar receiver. Therefore, we propose to use the metric $P_{DC}$ to optimize the radar bandwidth with respect to noise and clutter conditions.

We first define the $P_{DC}$ metric below and provide a analytical framework for deriving it based on radar, target and clutter conditions.
\begin{definition}
The radar detection coverage probability ($P_{DC}$) is defined as the probability that the SCNR for a single target at a Euclidean distance $r_t$ from a monostatic radar, is above a predefined threshold $\gamma$: 
$P_{DC}(r_t) \triangleq \mathbb{P}\left(SCNR(r_t) \geq \gamma \right)$.
\end{definition}
\begin{theorem}
The radar detection coverage probability of a target at a distance $r_t$ from the radar is given by
\begin{align}
\label{eq:theorem1}
\begin{split}
P_{DC}(r_t) = exp \left( \frac{-\Noise \gamma}{P_{tx}\hxt \rcst}\right) \cdots \\
exp(-\rho r_t \Delta r \int_{\phi_c}(1-J(\theta_c))d\phi_c),
\end{split}
\end{align} 
where
\begin{align}
\label{eq:Jrc}
J(\theta_c)  
=\int_0^{\infty}exp\left(\frac{-\gamma G(\theta_c) \sigma_c}{ \rcst}\right)P(\sigma_c)d\sigma_c.
\end{align}
\end{theorem}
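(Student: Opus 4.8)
The plan is to read $P_{DC}(r_t)=\mathbb{P}\!\left(SCNR(r_t)\ge\gamma\right)$ as a probability over \emph{all} the random quantities entering the ratio in \eqref{eq:SCNR}---the target cross-section $\sigma_t$, the clutter cross-sections $\sigma_c$, the interference factors $G_c$, and the point process $\Phi$---and to exploit two facts: the exponential (Swerling~1) law of $\sigma_t$ in \eqref{eq:TargetRCS}, and the probability generating functional (PGFL) of the homogeneous PPP. Writing the clutter aggregate as $C=\Su P_{tx}G(\theta_c)G_c\sigma_c\hxc$, I would first rearrange the coverage event $\{SCNR(r_t)\ge\gamma\}$ into the equivalent event $\bigl\{\sigma_t\ge \gamma(C+\Noise)/(P_{tx}\hxt)\bigr\}$. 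Conditioning on the clutter and invoking the complementary CDF of the exponential target RCS, $\mathbb{P}(\sigma_t\ge x)=\exp(-x/\rcst)$, converts the coverage probability into an expectation over the clutter alone.

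Because $\Noise$ is deterministic, this expectation factors into the noise term $\exp\!\bigl(-\Noise\gamma/(P_{tx}\hxt\rcst)\bigr)$---already the first factor in \eqref{eq:theorem1}---multiplied by $\Eiii\!\left[\exp(-sC)\right]$ with $s=\gamma/(P_{tx}\hxt\rcst)$. The second factor is exactly the Laplace transform of the shot-noise field $C$. The next step is to evaluate it via the PGFL: for a sum of marks over a PPP of intensity $\rho$ one has $\E[\exp(-s\Su g(\mathbf X_c))]=\exp\!\bigl(-\rho\int_A(1-\Eii[\exp(-s g)])\,dA\bigr)$, where the integration domain $A$ is restricted to the target's range cell $r_t-\tfrac{\Delta r}{2}<R_c<r_t+\tfrac{\Delta r}{2}$, $0\le\theta_c\le2\pi$, since only those scatterers contribute to $C$.

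The simplification that makes the result tractable is that the range cell is thin, so for every contributing scatterer $R_c\approx r_t$ and hence $\hxc\approx\hxt$. Substituting this into the exponent $sP_{tx}G(\theta_c)G_c\sigma_c\hxc$ causes $P_{tx}$ and $\hxt$ to cancel against $s$, leaving the power- and range-independent kernel $\gamma G(\theta_c)G_c\sigma_c/\rcst$. Taking the inner expectation over the Weibull $\sigma_c$ in \eqref{eq:ClutterRCS} (the interference factor $G_c$ being taken as unity, consistent with its absence from $J(\theta_c)$) produces precisely $J(\theta_c)$ of \eqref{eq:Jrc}. Finally I would pass to polar coordinates, $dA=R_c\,dR_c\,d\theta_c$, and carry out the radial integral over the thin cell, $\int_{r_t-\Delta r/2}^{r_t+\Delta r/2}R_c\,dR_c\approx r_t\,\Delta r$; what survives in the exponent is $\rho\, r_t\,\Delta r\int(1-J(\theta_c))\,d\theta_c$, giving the second factor of \eqref{eq:theorem1}.

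The main obstacle is the Laplace-transform/PGFL step: one must justify interchanging the product over $\Phi$ with the mark expectations and correctly identify the integration region as only the target's range cell rather than all of $\mathbb{R}^2$. The accompanying approximation $\hxc\approx\hxt$ is what yields the cancellation and the clean clutter kernel, so I would state it explicitly and argue it is accurate because $\Delta r=c/(2BW)$ is small; the $R_c\approx r_t$ simplification in the radial integral rests on the same thinness. Everything after the PGFL---the Weibull mark averaging and the remaining one-dimensional angular integral---is routine.
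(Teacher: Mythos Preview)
Your proposal is correct and follows essentially the same route as the paper: rewrite the coverage event in terms of $\sigma_t$, use the exponential (Swerling~1) complementary CDF to factor out the deterministic noise term, apply the PGFL of the PPP over the thin range cell, set $G_c=1$, and invoke the approximation $\hxc\approx\hxt$ so that $P_{tx}$ and $\hxt$ cancel and only the angular integral of $1-J(\theta_c)$ remains after the radial integral collapses to $r_t\,\Delta r$. The paper's proof differs only in presentation order (it fixes $G_c=1$ before the narrow-cell approximation rather than after), not in substance.
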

  
\begin{proof}
From \eqref{eq:SCNR}, we can see that the $P_{DC}$ at a given $r_t$ can be written in terms of the target cross-section in
\begin{align}
\label{eq:PdcGreaterThan}
    P_{DC}(r_t) = \mathbb{P}\left[\sigma_t >\frac{\Noise \gamma}{P_{tx}\hxt} +\frac{\gamma}{\hxt}\sum\limits_{c\in \phi} G(\theta_c)G_c\sigma_c \hxc \right].
\end{align}
Due to the exponential distribution of the target RCS in \eqref{eq:TargetRCS}, the above expression becomes
\begin{align}
\label{eq:Step1}
\begin{split}
\Eiii \left[ exp \left( \frac{-\Noise \gamma}{P_{tx}\hxt \rcst} +\sum\limits_{c\in \phi} \frac{-\gamma G(\theta_c)G_c\sigma_c \hxc}{\hxt \rcst}\right) \right] \\
=exp \left(\frac{-\Noise \gamma}{P_{tx}\hxt \rcst}\right) I(\cdot)
\end{split}
\end{align}
In the above expression, the terms within the first exponent - $N_s$, $\gamma$ $P_{tx}$, $\hxt$ and $\rcst$ - are all deterministic. This term encompasses the signal-to-noise (SNR) ratio with respect to the target returns. The second exponential term, $I(\cdot)$ shows the stochasticity of the clutter conditions (clutter cross-section and the PPP distribution of the clutter points). This term is independent of $P_{tx}$ and is a function of the signal-to-clutter ratio (SCR) of the system. $I(\cdot)$ can be further evaluated as
\begin{align}
\label{eq:Iexp}
I(\cdot) = \Eiii \left[ \Pro exp \left( \frac{-\gamma G(\theta_c)G_c\sigma_c \hxc}{\hxt \rcst}  \right) \right],
\end{align}
since the exponential of sum terms can be written as the product of exponential terms. Using the probability generating functional (PGFL) of SG \cite{haenggi2012stochastic}, we obtain
\begin{align}
\label{eq:PGFL}
\begin{split}
I(\cdot)= 
exp (-\rho \int_{r_t -\frac{\Delta r}{2}}^{r_t+\frac{\Delta r}{2}} \int_0^{2\pi} ( 1- \cdots \\
\Eii \left[exp\left(\frac{-\gamma G(\theta_c)G_c\sigma_c \hxcs}{\hxt \rcst}\right) \right] )d(\vec{x}_c) )\\ 
=
exp \left(-\rho \int_{r_t -\frac{\Delta r}{2}}^{r_t+\frac{\Delta r}{2}} \int_0^{2\pi} \left( 1- J(\cdot)\right) r_c d\phi_c dr_c \right)
\end{split}
\end{align}
As pointed out before, we only consider the discrete  clutter that arise within the same range cell as the target.
The inner expectation term, $J(\cdot)$, depends on the distribution of $\sigma_c$ and the interference term $G_c$. Under worst case scenarios, $G_c$ is always 1 and the clutter returns add. Then
\begin{align}
\label{eq:Jrc2}
\begin{split}
J(\cdot) = \Ei \left[exp\left(\frac{-\gamma G(\theta_c)\sigma_c \hxcs}{\hxt \rcst}\right) \right] \\
=\int_0^{\infty}exp\left(\frac{-\gamma G(\theta_c) \sigma_c \hxcs}{\hxt \rcst}\right)P(\sigma_c)d\sigma_c
\end{split}
\end{align}
Now if the range resolution cell is sufficiently narrow, which is usually the case for microwave and millimeter wave radars, then  $\hxcs \sim \hxt$ when $r_t -\frac{\Delta r}{2} \leq r_c \leq r_t + \frac{\Delta r}{2}$. Therefore \eqref{eq:Jrc2} becomes independent of $r_t$ as shown below in
\begin{align}
\label{eq:Jrc3}
J(\theta_c)  
=\int_0^{\infty}exp\left(\frac{-\gamma G(\theta_c) \sigma_c}{ \rcst}\right)P(\sigma_c)d\sigma_c.
\end{align}
Substituting this in \eqref{eq:PGFL}, we obtain
\begin{align}
\label{eq:SimpleI1}
    I = exp(-\rho r_t \Delta r \int_{\phi_c}(1-J(\theta_c))d\phi_c).
\end{align}
Combining \eqref{eq:Jrc2}, \eqref{eq:Iexp} and \eqref{eq:Step1}, we prove the theorem.
\end{proof}

In the above discussion not all distributions lead to tractable solutions. For example, the choice of exponential model (Swerling 1/2) model of the target RCS was crucial. The higher order Swerling 3 distribution results in far more challenging mathematical operations and hence not discussed here.  In \cite{ram2020estimating}, we discussed the effect of the gain of the radar antennas on the detection performance. We do not repeat that discussion here and confine our discussion to isotropic radar antennas where $G(\theta_c) =1$. The $J(\cdot)$ term which is a function of the clutter cross-section can be computed numerically. However for two cases, when $\alpha = 1$ corresponding to exponential distribution and for $\alpha = 2$ corresponding to Rayleigh distribution, analytical expressions for $J(\cdot)$ are derived. 

\emph{Case 1}: When $\alpha = 1$ for exponential distribution of clutter, with mean clutter cross-section, $\rcsc$, the $J(\cdot)$ term reduces to 
\begin{align}
\label{eq:JforAlpha1}
J = \frac{1}{1+\nu}     
\end{align}
where $\nu = \frac{\gamma \rcsc}{\rcst}$.
Substituting this expression back in \eqref{eq:Iexp}, we obtain
\begin{align}
\label{eq:IforAlpha1}
    I = exp\left( -2\pi \rho \frac{\nu}{\nu+1}r_t \Delta r \right)
\end{align}
which can be easily evaluated numerically.

\emph{Case 2:} When $\alpha = 2$ for Rayleigh distribution of clutter, the analytical solution for $J$ is
\begin{align}
\label{eq:JforAlpha2}
J = 1- \frac{\sqrt{\pi}\nu}{2} e^{\nu^2/4}erf(\frac{\nu}{2}),
\end{align}
where $erf(\cdot)$ is the error function. 
This results in 
\begin{align}
\label{eq:IforAlpha2}
    I = exp\left( -\pi^{1.5} \rho \nu e^{\nu^2/4}erf(\frac{\nu}{2}) r_t \Delta r\right)
\end{align}

\begin{corollary}
\label{corr:rcscConvergence1}
In case 1, when $\nu$ is much greater than 1 which corresponds to the situation when $\rcsc \gg \rcst$, then $I$ becomes independent of $\nu$. As a result, $P_{DC}$ becomes independent of $\rcsc$. 
Similarly, in case 2, the exponential term ($e^{\nu^2/4}$) within $J$ becomes very high when $\nu$ is high. As a result $I$ converges to 1 and the $P_{DC}$ is no longer a function of $\rcsc$.
In other words. $P_{DC}$ deteriorates with increase in $\rcsc$ till it asymptotically converges at a limit.
\end{corollary}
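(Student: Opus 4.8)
The plan is to reduce the whole statement to the monotonicity and limiting behavior of the single factor $I$ viewed as a function of the composite parameter $\nu=\frac{\gamma\rcsc}{\rcst}$. Inspecting \eqref{eq:theorem1}, the only channel through which $\rcsc$ can influence $P_{DC}$ is $I$ in \eqref{eq:SimpleI1}: the leading exponential depends only on $\Noise$, $\gamma$, $P_{tx}$, $\hxt$ and $\rcst$, none of which contain $\rcsc$. Since $\nu$ is a strictly increasing (linear) function of $\rcsc$, it suffices to prove that (i) $I$ is strictly decreasing in $\nu$, and (ii) $I$ converges to a finite, $\nu$-independent limit as $\nu\to\infty$. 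Together these are exactly the corollary's assertion that $P_{DC}$ deteriorates monotonically with $\rcsc$ and then saturates.

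The cleanest route handles both cases at once. With isotropic antennas ($G(\theta_c)=1$), I would substitute $u=\sigma_c/\rcsc$ in \eqref{eq:Jrc3}: using \eqref{eq:ClutterRCS} the measure $P(\sigma_c)\,d\sigma_c$ becomes the scale-normalized Weibull density $\alpha u^{\alpha-1}e^{-u^{\alpha}}\,du$, which no longer involves $\rcsc$, while the exponent collapses to $-\nu u$. Hence $J(\nu)=\int_0^{\infty}e^{-\nu u}\,\alpha u^{\alpha-1}e^{-u^{\alpha}}\,du$ is simply the Laplace transform of a fixed probability density evaluated at $\nu$. Two elementary properties of such a transform then do all the work: differentiating under the integral gives $J'(\nu)=-\int_0^{\infty}u\,e^{-\nu u}\,\alpha u^{\alpha-1}e^{-u^{\alpha}}\,du<0$, so $J$ is strictly decreasing, and dominated convergence gives $J(\nu)\to0$ as $\nu\to\infty$. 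Because $J$ is $\theta_c$-independent here, \eqref{eq:SimpleI1} reduces to $I=\exp\big(-2\pi\rho r_t\Delta r\,(1-J(\nu))\big)$, so $I$ strictly decreases and converges to the void probability $\exp(-2\pi\rho r_t\Delta r)$, independent of $\nu$ and hence of $\rcsc$. This proves monotonicity and saturation for every shape parameter, specializing to Cases 1 and 2.

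Finally I would reconcile this with the two closed forms. Case 1 is immediate: in \eqref{eq:IforAlpha1} the factor $\frac{\nu}{\nu+1}=1-\frac{1}{\nu+1}$ increases monotonically to $1$, so $I$ decreases to the same floor $\exp(-2\pi\rho r_t\Delta r)$ and becomes independent of $\nu$. Case 2 is where the real effort lies: the exponent governing $I$ in \eqref{eq:IforAlpha2} contains the factor $\nu e^{\nu^2/4}$ multiplied by the complementary error-function tail $\mathrm{erfc}(\nu/2)=1-\mathrm{erf}(\nu/2)$ produced when $J$ is evaluated from \eqref{eq:Jrc3}, an indeterminate product of a term blowing up like $e^{\nu^2/4}$ and a tail decaying like $e^{-\nu^2/4}$. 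The expected main obstacle is extracting this cancellation: invoking the asymptotic $\mathrm{erfc}(x)\sim \frac{e^{-x^2}}{x\sqrt{\pi}}$ as $x\to\infty$ (with $x=\nu/2$) shows that $\nu e^{\nu^2/4}\,\mathrm{erfc}(\nu/2)\to 2/\sqrt{\pi}$, whereupon the exponent tends to $-2\pi\rho r_t\Delta r$, matching the void-probability floor found in the unified argument. I would therefore conclude that in the large-clutter regime $I$, and with it $P_{DC}$, ceases to depend on $\rcsc$, having decreased monotonically to the finite limit $\exp\big(-\frac{\Noise\gamma}{P_{tx}\hxt\rcst}\big)\exp(-2\pi\rho r_t\Delta r)$.
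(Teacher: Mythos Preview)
Your proposal is correct and in fact more careful than the paper's own treatment. The paper does not give a separate proof of this corollary; the justification is the heuristic language embedded in the statement itself (and, for general $\alpha$, the derivative computation in Corollary~\ref{corr:rcscConvergence2}). Your unified argument---substitute $u=\sigma_c/\rcsc$ so that $J(\nu)$ is the Laplace transform at $\nu$ of the scale-free Weibull density, then read off monotonicity from $J'(\nu)<0$ and the limit $J(\nu)\to 0$ from dominated convergence---is cleaner, handles every $\alpha$ at once, and avoids the logical slip in Corollary~\ref{corr:rcscConvergence2} (``bounded below and decreasing'' alone does not force the limit to be $0$; your dominated-convergence step is what actually pins that down).

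Two points worth flagging explicitly. First, you quietly correct the paper: the closed form in \eqref{eq:JforAlpha2}--\eqref{eq:IforAlpha2} must carry $\mathrm{erfc}(\nu/2)$, not $\mathrm{erf}(\nu/2)$, for $J$ to stay in $[0,1]$; your asymptotic cancellation argument is exactly the right way to resolve the apparent $e^{\nu^2/4}$ blow-up. Second, your limit $I\to\exp(-2\pi\rho r_t\Delta r)$ is the correct one and disagrees with the corollary's assertion that ``$I$ converges to $1$'' in Case~2. That claim in the paper is erroneous: $I\to 1$ would mean the clutter contribution vanishes as $\rcsc$ grows, contradicting the stated deterioration of $P_{DC}$. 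Your void-probability floor is consistent across both cases and matches the Case~1 limit of \eqref{eq:IforAlpha1}, so you have also repaired an internal inconsistency in the statement you were asked to prove.
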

\begin{corollary}
\label{corr:rcscConvergence2}
The same effect of $P_{DC}$ versus $\rcsc$ is observed for generalized Weibull clutter parameter $\alpha$.
If we define $\kappa = \frac{-\gamma G(\theta_c) \hxcs}{\hxt}$ in the generalized $J$ in \eqref{eq:Jrc2}, then the first derivative of $J$ with respect to $\rcsc$ is given by
\begin{align}
\label{eq:dJbydSigma}
\begin{split}
    \frac{dJ}{d\rcsc} = \int_{0}^{\infty} exp(-\kappa \sigma_c) \cdots \\
    \left[ \frac{\alpha \sigma_c^{\alpha-1}}{\rcsc^{\alpha}}exp\left(-\left(\frac{\sigma_c}{\rcsc}\right)^{\alpha}\right) \left(\frac{-\alpha}{\rcsc^{\alpha+1}} -\alpha \left(\frac{\sigma_c}{\rcsc}\right)^{\alpha -1}\right) \right].
\end{split}
\end{align}
In \eqref{eq:dJbydSigma}, the terms inside the integral are always negative. Since $J$ is bounded below and a decreasing function of $\rcsc$, we can conclude that as $\rcsc$ tends to $\infty$, $J$ will tend to 0. Hence, for high values of $\rcsc$, $P_{DC}$ is independent of $\rcsc$ for generalized Weibull clutter conditions. 
\end{corollary}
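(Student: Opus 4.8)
The plan is to reduce the entire statement to the asymptotic behaviour of the single scalar $J$ as a function of $\rcsc$, and then propagate that behaviour through the already-established expression for $I$ in \eqref{eq:SimpleI1}. By Theorem~1 the only place where $\rcsc$ enters $P_{DC}$ is inside $J$ (the leading SNR factor $\exp(-\Noise\gamma/(P_{tx}\hxt\rcst))$ in \eqref{eq:theorem1} contains $\rcst$ but never $\rcsc$), so it suffices to prove that $J\to 0$ as $\rcsc\to\infty$; once that is in hand, $I=\exp(-\rho r_t\Delta r\int_{\phi_c}(1-J)\,d\phi_c)$ converges to the $\rcsc$-free constant $\exp(-2\pi\rho r_t\Delta r)$ (using $G(\theta_c)=1$), and $P_{DC}$ loses its dependence on $\rcsc$.

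First I would set $J$ up as a Laplace-transform-type integral. Writing $\kappa$ for the positive coefficient multiplying $\sigma_c$ in the exponent of \eqref{eq:Jrc2}, we have $J(\rcsc)=\int_0^\infty e^{-\kappa\sigma_c}P(\sigma_c)\,d\sigma_c$ with $P(\sigma_c)$ the Weibull density of scale $\rcsc$. The cleanest route is the change of variables $u=\sigma_c/\rcsc$, under which the Weibull density normalises to the scale-free form $\alpha u^{\alpha-1}e^{-u^\alpha}$ and $J(\rcsc)=\int_0^\infty e^{-\kappa\rcsc u}\,\alpha u^{\alpha-1}e^{-u^\alpha}\,du$. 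In this form the monotonicity the corollary asserts is immediate: increasing $\rcsc$ shrinks $e^{-\kappa\rcsc u}$ pointwise for every $u>0$, so $J$ is strictly decreasing — which is exactly the content of the sign analysis of $dJ/d\rcsc$ in \eqref{eq:dJbydSigma}, obtained by differentiating under the integral sign.

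Next I would establish the two ingredients needed for convergence. Boundedness below is free: the integrand is nonnegative, so $0\le J\le 1$ (indeed $J$ is the value of a Laplace transform of a probability density). To pin the limit down to exactly $0$ — and here is the one place where ``decreasing and bounded below'' is not by itself enough, since that argument alone only yields convergence to some $L\ge 0$ — I would invoke dominated convergence on the $u$-integral: the integrand $e^{-\kappa\rcsc u}\alpha u^{\alpha-1}e^{-u^\alpha}$ tends pointwise to $0$ for every $u>0$ as $\rcsc\to\infty$, while being dominated uniformly in $\rcsc$ by the integrable envelope $\alpha u^{\alpha-1}e^{-u^\alpha}$. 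Hence $L=0$. I expect this identification of the limit, rather than the monotonicity or the differentiation, to be the genuine obstacle, because the corollary's stated justification short-circuits it.

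Finally I would close the loop. With $J\to 0$, the exponent $\int_{\phi_c}(1-J)\,d\phi_c\to\int_0^{2\pi}1\,d\phi_c=2\pi$ under the isotropic assumption $G(\theta_c)=1$; substituting into \eqref{eq:SimpleI1} gives $I\to\exp(-2\pi\rho r_t\Delta r)$, a quantity free of $\rcsc$, and since the remaining factor of $P_{DC}$ never involved $\rcsc$, the coverage probability is asymptotically independent of $\rcsc$ for every shape parameter $\alpha$. As a sanity check I would specialise to $\alpha=1$, where $J=1/(1+\nu)\to 0$ makes the collapse transparent, and confirm consistency with the $\alpha=2$ case, thereby recovering Corollary~\ref{corr:rcscConvergence1}.
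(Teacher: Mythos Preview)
Your proposal is correct and in fact tightens a step that the paper leaves loose. The paper's own argument is the one embedded in the corollary statement: differentiate $J$ under the integral sign with respect to $\rcsc$, observe that the integrand in \eqref{eq:dJbydSigma} is everywhere negative, conclude that $J$ is decreasing, note that $J$ is bounded below, and then assert that $J\to 0$. Your route is different in two places. First, instead of computing $dJ/d\rcsc$ directly, you substitute $u=\sigma_c/\rcsc$ so that $\rcsc$ appears only through the factor $e^{-\kappa\rcsc u}$, making the monotonicity visible by inspection rather than by a derivative calculation. Second, and more importantly, you recognise that ``decreasing and bounded below'' yields convergence to some $L\ge 0$ but not $L=0$; you then supply the missing identification $L=0$ via dominated convergence with the integrable envelope $\alpha u^{\alpha-1}e^{-u^\alpha}$. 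The paper does not justify this last step, so your argument is strictly more complete. The concluding propagation through $I$ and the specialisation to $\alpha=1,2$ match the paper's intent exactly.
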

\begin{corollary}
\label{corr:PtxConvergence}
In \eqref{eq:theorem1}, it is evident that the first exponential term indicates the effect of the SNR on the radar detection performance while the second term shows the effect of SCR. Hence increase in the transmitted power improves the radar detection performance while the radar operates in the noise limited scenario but has limited impact on the performance when the radar enters the clutter limited scenario. For fixed target and clutter conditions, $P_{DC}$ converges to  $I_{conv}$ with increase in $P_{tx}$. The transmitted power at which detection performance reaches 99\% ($e^{-0.01}$) of the convergence value is given by
\begin{align}
\label{eq:PtxEst}
\nonumber I_{conv}e^{-0.01} = exp \left( \frac{-\Noise \gamma}{P_{tx}^{max}\hxt \rcst}\right)
I_{conv}\\
    => P_{tx}^{max} = \frac{100\Noise \gamma}{\hxt \rcst}
\end{align}
The maximum transmitted power is therefore independent of clutter parameters such as $\rho$, $\rcsc$ and $\alpha$.
\end{corollary}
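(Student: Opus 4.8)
The plan is to exploit the multiplicative structure of $P_{DC}(r_t)$ already established in Theorem 1. Writing the right-hand side of \eqref{eq:theorem1} as a product $P_{DC}(r_t) = A(P_{tx})\,I$, where $A(P_{tx}) = exp\left(\frac{-\Noise \gamma}{P_{tx}\hxt \rcst}\right)$ carries all of the transmitted-power dependence and $I = exp(-\rho r_t \Delta r \int_{\phi_c}(1-J(\theta_c))d\phi_c)$ carries the clutter contribution, the central observation is that $I$ is independent of $P_{tx}$. This was already noted in the derivation of the theorem, since $P_{tx}$ cancels inside the exponent of \eqref{eq:Iexp}. Hence every dependence of $P_{DC}$ on the transmitted power is isolated in the single factor $A(P_{tx})$.

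First I would analyse $A$ as a function of $P_{tx}$. Setting $c = \frac{\Noise \gamma}{\hxt \rcst} > 0$, we have $A(P_{tx}) = exp(-c/P_{tx})$, whose exponent increases monotonically toward $0$ as $P_{tx} \to \infty$. Consequently $A(P_{tx})$ is strictly increasing in $P_{tx}$ and $\lim_{P_{tx}\to\infty} A(P_{tx}) = 1$. Because $I$ is a fixed positive constant (for the Weibull model the inner integral $\int_{\phi_c}(1-J(\theta_c))d\phi_c$ is finite, so $0 < I \le 1$), it follows that $P_{DC}(r_t) = A(P_{tx})\,I$ increases monotonically to the limit $I$. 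This identifies the convergence value as $I_{conv} = I$ and justifies the saturation claim: once the SNR factor $A$ is near unity, further power yields no gain, since the residual loss is purely the clutter-limited factor $I$.

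Next I would solve for the power at which $P_{DC}$ attains the prescribed fraction $e^{-0.01}$ of its limit. Imposing $P_{DC}(r_t) = e^{-0.01} I_{conv}$ and substituting $P_{DC} = A(P_{tx})\,I_{conv}$, the common factor $I_{conv}$ cancels, leaving $A(P_{tx}^{max}) = e^{-0.01}$, i.e. $exp(-c/P_{tx}^{max}) = e^{-0.01}$. Equating exponents gives the linear relation $c/P_{tx}^{max} = 0.01$, hence $P_{tx}^{max} = 100\,c = \frac{100\Noise \gamma}{\hxt \rcst}$, which is exactly \eqref{eq:PtxEst}. Since $c$ involves only $\Noise$, $\gamma$, $\hxt$ and $\rcst$, the clutter parameters $\rho$, $\rcsc$ and $\alpha$ — which enter $P_{DC}$ solely through the cancelled factor $I$ — do not appear, establishing the stated independence.

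There is no hard computation here; the result follows almost entirely from the factorization, so the cancellation of $I_{conv}$ is the crux. The one point requiring care is the legitimacy of dividing out $I_{conv}$, which rests on $I > 0$; this holds because $J(\theta_c) \in [0,1]$ makes the exponent of $I$ finite and nonpositive, so $I$ never vanishes. The main conceptual obstacle is therefore recognizing — as Theorem 1 already guarantees — that the clutter factor is genuinely invariant under changes in $P_{tx}$, so that it drops out cleanly when the threshold fraction is imposed.
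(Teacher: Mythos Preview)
Your proposal is correct and follows essentially the same approach as the paper: factor $P_{DC}$ into the $P_{tx}$-dependent SNR term and the $P_{tx}$-independent clutter term $I$, identify $I_{conv}=I$, impose the $e^{-0.01}$ threshold, cancel $I_{conv}$, and solve the resulting exponent equation. If anything, you supply more justification than the paper does (monotonicity of $A$, positivity of $I$ to validate the cancellation), but the argument is the same.
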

\begin{corollary}
\label{corr:NNdist}
In the above discussion, the radar detection coverage probability is provided in terms of the clutter density, $\rho$. However, in some experiments, other spatial statistical parameters besides clutter density may be used. One popular parameter is the average nearest neighbor distance, $\overline{r}$, which is the average distance between the centroid of a clutter point and its nearest neighboring clutter point. 
For a uniform random distribution of the clutter points, the average nearest neighbor distance is related to the clutter density \cite{ebdon1985statistics} as shown below
\begin{align}
\label{eq:roverbar}
    \overline{r} = \frac{1}{2\sqrt{\rho}}
\end{align}
Therefore, the radar detection probability in \eqref{eq:theorem1} can be written as 
\begin{align}
\label{eq:theorem2}
\begin{split}
P_{DC}(r_t) = exp \left( \frac{-\Noise \gamma}{P_{tx}\hxt \rcst}\right) \cdots \\
exp\left(-\frac{r_t \Delta r}{2\overline{r}^2} \int_{\phi_c}(1-J(\theta_c))d\phi_c\right)
\end{split}
\end{align} 
When the distribution of the clutter points deviate from the uniform distribution and cluster in some regions, the inhomogenity of the clutter distribution can be modeled either with the a non-uniform clutter density ($\rho(x_c)$) or through additional statistics on the average nearest neighbor distance \cite{ebdon1985statistics}.
\end{corollary}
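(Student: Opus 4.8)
The plan is to treat this corollary as a reparametrization of Theorem 1: I would replace the clutter density $\rho$ by the average nearest neighbor distance $\overline{r}$ using the relation \eqref{eq:roverbar}, and then substitute into the coverage expression \eqref{eq:theorem1}. The argument decomposes cleanly into two parts, (i) establishing \eqref{eq:roverbar} for a homogeneous PPP, and (ii) an algebraic substitution.

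For part (i), I would derive the law of the nearest neighbor distance directly from the void probability of $\Phi$. Centering on an arbitrary clutter point and letting $D$ be the distance to its nearest neighbor, the event $\{D > r\}$ is exactly the event that the disk of radius $r$ about that point contains no further points of $\Phi$; by the void probability of a homogeneous PPP of intensity $\rho$ this equals $\exp(-\rho \pi r^2)$. Hence the cumulative distribution is $F_D(r) = 1 - \exp(-\rho \pi r^2)$ and the density is $f_D(r) = 2\pi \rho\, r\, \exp(-\rho \pi r^2)$. The mean $\overline{r} = \int_0^{\infty} r\, f_D(r)\, dr$ then reduces to the elementary Gaussian moment $\int_0^{\infty} r^2 \exp(-\rho \pi r^2)\, dr = \frac{\sqrt{\pi}}{4(\rho \pi)^{3/2}}$, which after simplification gives $\overline{r} = \frac{1}{2\sqrt{\rho}}$, i.e.\ \eqref{eq:roverbar}.

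For part (ii), I would invert \eqref{eq:roverbar} to write $\rho = \frac{1}{4\overline{r}^2}$ and substitute into the second (clutter) exponential of \eqref{eq:theorem1}, where $\rho$ enters only linearly through $\rho r_t \Delta r \int_{\phi_c}(1 - J(\theta_c))\, d\phi_c$. The first (SNR) exponential and the inner term $J(\theta_c)$ of \eqref{eq:Jrc} are independent of $\rho$, so the replacement touches only the clutter exponent and produces the nearest-neighbor form \eqref{eq:theorem2}.

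The genuine content is confined to the moment computation in part (i); the substitution itself is routine. The one point that demands care --- and where an error is easiest to introduce --- is the bookkeeping of the proportionality constant, since $\overline{r}^2 = \frac{1}{4\rho}$, so that the clutter exponent should carry the factor $\frac{1}{4\overline{r}^2}$. I would also emphasize that the whole reparametrization hinges on complete spatial randomness: \eqref{eq:roverbar} is precisely the Clark--Evans relation and ceases to hold once the points cluster, which is exactly why the closing remark defers inhomogeneous clutter to either a spatially varying density $\rho(x_c)$ or to higher-order nearest-neighbor statistics.
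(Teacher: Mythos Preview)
Your plan matches the paper's own treatment: the corollary carries no separate proof in the paper, it simply quotes the Clark--Evans relation \eqref{eq:roverbar} from \cite{ebdon1985statistics} and substitutes into \eqref{eq:theorem1}. Your part (ii) is exactly that substitution, and your part (i) goes beyond the paper by actually deriving $\overline{r}=\tfrac{1}{2\sqrt{\rho}}$ from the void probability of the PPP, which the paper does not do. So the approach is the same, only more self-contained on your side.

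One remark on the constant you flagged. Your bookkeeping is correct: from $\overline{r}=\tfrac{1}{2\sqrt{\rho}}$ one gets $\rho=\tfrac{1}{4\overline{r}^2}$, and hence the clutter exponent $\rho\, r_t\,\Delta r\int_{\phi_c}(1-J(\theta_c))\,d\phi_c$ becomes $\tfrac{r_t\Delta r}{4\overline{r}^2}\int_{\phi_c}(1-J(\theta_c))\,d\phi_c$. The printed corollary carries $\tfrac{r_t\Delta r}{2\overline{r}^2}$, which does not follow from \eqref{eq:roverbar} and \eqref{eq:theorem1} as stated; your factor $1/4$ is the one consistent with the algebra. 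This does not affect the argument or any qualitative conclusion, but you are right to call it out.
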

Next, we provide a theorem for optimizing the radar $BW$ under noisy and cluttered conditions. 
\begin{theorem} For a narrow range resolution cell, the optimum $BW$ for detecting a Swerling 1 target is given by
\begin{align}
\label{eq:theorem3}
    BW = \sqrt{\frac{\pi \rho c  (1-J)r_t P_{tx} \hxt \rcst}{K_BT_s\gamma }}.
\end{align}
\end{theorem}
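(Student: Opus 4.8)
The plan is to treat Theorem~1's closed form for $P_{DC}(r_t)$ in \eqref{eq:theorem1} as a function of a single design variable, the bandwidth, and to carry out an elementary one-dimensional optimization. Bandwidth enters in exactly two places: the noise power $\Noise = K_B T_s BW$, which grows linearly in $BW$ and sits in the first (SNR) exponential, and the range cell width $\Delta r = c/(2BW)$, which shrinks like $1/BW$ and sits in the second (SCR) exponential. First I would specialize to the isotropic-antenna setting used throughout this section, so that $G(\theta_c)=1$ and $J(\theta_c)=J$ is constant in $\theta_c$; the angular integral then collapses, $\int_{\phi_c}(1-J)\,d\phi_c = 2\pi(1-J)$, and the clutter exponential simplifies to $\exp\!\big(-\pi\rho c(1-J)r_t/BW\big)$ after inserting $\Delta r = c/(2BW)$.

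Substituting both bandwidth dependencies into \eqref{eq:theorem1} and merging the two exponentials, $P_{DC}(r_t)$ takes the form $\exp(-f(BW))$ with
\[
f(BW) = \frac{K_B T_s \gamma}{P_{tx}\hxt\rcst}\,BW \;+\; \frac{\pi \rho\, c\,(1-J)\,r_t}{BW}.
\]
Because $\exp(\cdot)$ is monotone, maximizing $P_{DC}$ is equivalent to minimizing $f$. Writing $f(BW)=a\,BW + b/BW$ with $a = K_B T_s\gamma/(P_{tx}\hxt\rcst)$ and $b = \pi\rho c(1-J)r_t$, I would set $f'(BW)=a - b/BW^2 = 0$, obtaining $BW=\sqrt{b/a}$, which is exactly \eqref{eq:theorem3}. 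The second derivative $f''(BW)=2b/BW^3 > 0$ for positive $b$ confirms that this critical point is the unique global minimizer of $f$ on $(0,\infty)$, and hence the global maximizer of $P_{DC}$; this step also supplies the physical reading of the result as the balance point between the noise penalty of large $BW$ and the clutter benefit of a narrow range cell.

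The one subtlety I would flag, and the reason the theorem is stated \emph{for a narrow range resolution cell}, is a hidden dependence of $J$ on $BW$. In general $J$ is the integral over $\sigma_c$ in \eqref{eq:Jrc2} whose exponent carries the factor $\hxcs/\hxt$ evaluated across the cell $[r_t-\Delta r/2,\,r_t+\Delta r/2]$, and $\Delta r$ itself depends on $BW$. The narrow-cell approximation $\hxcs\sim\hxt$ used to reach \eqref{eq:Jrc3} is precisely what makes $J$ independent of $r_t$, and therefore of $\Delta r$ and $BW$; this decoupling is what reduces the problem to the clean $a\,BW+b/BW$ form whose minimizer is closed-form. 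Without it, $J=J(BW)$ would have to be differentiated as well and no elementary optimum would survive. So the main obstacle here is conceptual rather than computational: one must justify holding $J$ (and, in the tractable special cases, the constants in \eqref{eq:JforAlpha1} and \eqref{eq:JforAlpha2}) fixed while differentiating in $BW$, which is exactly the regime the hypothesis guarantees.
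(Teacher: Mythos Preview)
Your proposal is correct and follows essentially the same route as the paper: specialize to isotropic antennas so the angular integral collapses to $2\pi(1-J)$, insert $\Noise=K_BT_sBW$ and $\Delta r=c/(2BW)$ into \eqref{eq:theorem1}, take the logarithm, and set the $BW$-derivative to zero. Your treatment is in fact slightly more careful than the paper's, since you verify the second-order condition and make explicit why the narrow-cell hypothesis is needed to decouple $J$ from $BW$; the paper simply asserts these steps.
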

\begin{proof}
For a narrow $\Delta r$ and isotropic radar antennas, the integral in \eqref{eq:SimpleI1} can be reduced to
\begin{align}
\label{eq:SimpleI2}
    I(\cdot) = exp\left(-2 \pi \rho (1-J)r_t \Delta r \right)
\end{align}
Therefore the radar detection probability is a function of $BW$ as shown in
\begin{align}
\label{eq:BWfunc}
    P_{DC} = exp \left( \frac{-K_B T_s BW \gamma}{P_{tx}\hxt \rcst}\right) exp\left(-2 \pi \rho (1-J)r_t \frac{c}{2BW} \right).
\end{align}
Higher $BW$ results in greater noise which causes a reduction in the $P_{DC}$ due to the first exponential term. However increase in $BW$ also results in a reduced $\Delta r$ resulting in lesser clutter returns as seen in the second exponential term. Therefore, the optimum $BW$ for maximizing $P_{DC}$ can be determined by taking a natural logarithm on both sides of \eqref{eq:BWfunc} as shown below
\begin{align}
\label{eq:derivePdc}
    \ln{P_{DC}} =  \frac{-K_B T_s BW \gamma}{P_{tx}\hxt \rcst} -2 \pi \rho (1-J)r_t \frac{c}{2BW}. 
\end{align}
The optimum $BW$, shown in \eqref{eq:theorem3}, is obtained when the first derivative of the above expression with respect to $BW$ is equated to zero.
\end{proof}
The optimum $BW$ is shown to be a function of the target distance from the radar. In many situations it may not possible to change the radar bandwidth while tracking the target. In those scenarios, it may be preferable to be able to adjust the threshold $\gamma$ for a fixed transmitted power and radar bandwidth $BW$.
\begin{corollary}
\label{corr:optGamma}
The $\gamma$ for obtaining the maximum $P_{DC}$ for a given radar transmitted power and bandwidth for large clutter cross-sections should be adjusted based on the target distance $r_t$ using
\begin{align}
\label{eq:OptGamma}
\gamma (r_t) = \frac{\pi \rho c r_t P_{tx} \hxt \rcst}{K_BT_s BW^2}.
\end{align}
We have already observed from corollaries \ref{corr:rcscConvergence1} and \ref{corr:rcscConvergence2} that $J$ becomes 0 for high $\rcsc$. Therefore, in those scenarios, the above expression is directly obtained from \eqref{eq:theorem3}. Note that in the above expression, the $\gamma$ is independent of the type of clutter cross-section distribution ($\alpha$) and is only dependent on the density of the clutter scatterers.
\end{corollary}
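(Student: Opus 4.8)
The plan is to recognize that Theorem 3 already establishes a single stationarity identity linking the bandwidth $BW$ and the threshold $\gamma$, and that this corollary merely reads that identity in the opposite direction --- solving for $\gamma$ with $BW$ held fixed rather than for $BW$ with $\gamma$ held fixed. I would therefore begin from the log-coverage expression \eqref{eq:derivePdc}, $\ln P_{DC} = -K_B T_s BW \gamma/(P_{tx}\hxt\rcst) - \pi\rho(1-J)r_t c/BW$, which is exactly the object whose stationary point produced \eqref{eq:theorem3}. It is worth noting at the outset that one cannot literally maximize $P_{DC}$ over $\gamma$ for fixed $BW$, since $\ln P_{DC}$ is monotone decreasing in $\gamma$ and the naive optimum is $\gamma=0$; the corollary's ``$\gamma$ for maximum $P_{DC}$'' is the self-consistent threshold that keeps the chosen $BW$ at its coverage-optimal operating point, i.e. the value satisfying the same stationarity relation as Theorem 3.

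First I would form the stationarity condition $\partial \ln P_{DC}/\partial BW = 0$ underlying Theorem 3, treating $J$ as independent of $BW$ (which is justified because after the narrow-cell approximation $J$ in \eqref{eq:Jrc3} depends on $\gamma$ and the clutter statistics, not on $BW$). This yields the algebraic identity $K_B T_s \gamma/(P_{tx}\hxt\rcst) = \pi\rho(1-J)r_t c/BW^2$ in the pair $(BW,\gamma)$. Theorem 3 solves this for $BW$; here I would instead solve the very same identity for $\gamma$, obtaining $\gamma = \pi\rho(1-J)r_t c\,P_{tx}\hxt\rcst/(K_B T_s BW^2)$.

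The decisive step, and the one I expect to be the main obstacle, is justifying that $\gamma$ can be extracted in closed form at all. The difficulty is that $J$ in \eqref{eq:Jrc3} is itself a function of $\gamma$ --- for instance $J = 1/(1+\gamma\rcsc/\rcst)$ in Case 1 --- so the identity above is in general implicit and transcendental in $\gamma$ and admits no elementary inversion. This is precisely where the large clutter cross-section hypothesis enters: by Corollaries \ref{corr:rcscConvergence1} and \ref{corr:rcscConvergence2}, $J\to 0$ as $\rcsc$ grows, so $1-J\to 1$ and the $\gamma$-dependence carried by $J$ is suppressed. Substituting $J=0$ decouples $\gamma$ from the right-hand side and produces the stated closed form \eqref{eq:OptGamma}.

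Finally I would record the two structural claims the corollary makes. Because the limiting value $J=0$ is attained for every shape parameter --- Corollary \ref{corr:rcscConvergence2} establishes this for generalized Weibull clutter --- the resulting $\gamma(r_t)$ inherits no dependence on $\alpha$, while $\rho$ survives explicitly in the numerator, so the optimal threshold does scale with clutter density. I would close by noting that $\gamma$ scales with the product $r_t\hxt$, giving the intended operational reading that the threshold must be re-tuned with target range to keep the fixed $BW$ at the point where the marginal noise penalty of bandwidth exactly balances its marginal clutter-suppression benefit.
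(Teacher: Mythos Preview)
Your proposal is correct and follows essentially the same route as the paper: invoke Corollaries~\ref{corr:rcscConvergence1}--\ref{corr:rcscConvergence2} to set $J=0$, then invert the stationarity relation underlying \eqref{eq:theorem3} to solve for $\gamma$ rather than $BW$. Your added remark that $\ln P_{DC}$ is monotone decreasing in $\gamma$ (so the ``optimal $\gamma$'' must be read as the value keeping the fixed $BW$ at its stationary point, not as a literal maximizer over $\gamma$) is a clarification the paper does not make explicit but is consistent with its intent.
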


\section{Experimental Validation}
\label{sec:ExpValidation}
In prior works, the experimental validation of the SG results were based on Monte Carlo simulations. However, the complete electromagnetic phenomenology (attenuation, diffraction, scattering) are not captured through these simulations. Therefore, in this paper, we use a full wave electromagnetic solver to model the complete radar propagation phenomenology. However, the computational complexity of these solvers is dependent on the size of the region of interest and the wavelength of the source excitation. Also, these solvers are inherently deterministic and cannot capture the diversity in radar, target and clutter parameters. Therefore, we use a hybrid of electromagnetic based modeling based on finite FDTD and Monte Carlo based simulations to experimentally validate the SG results. 
\subsection{Finite Difference Time Domain Simulations}
\label{sec:FDTD}
The FDTD technique models the complete propagation physics between a source and the scatterers in a medium. Many prior works have used FDTD for modeling indoor clutter \cite{bufler2016radar,ram2009radar, ram2010simulation, ram2016through, vishwakarma2020mitigation, vishwakarma2020micro}. For computational simplicity, we consider a two-dimensional (2D) simulation space along the $XY$ plane spanning 20m by 20m. A narrowband sinusoidal infinitely long, line source at 1 GHz emulating the monostatic radar is introduced at the center of the space at ($0,0$)m. The space is discretized into grid points that are uniformly spaced $\frac{\lambda_c}{10}$ apart where $\lambda_c$ is the wavelength corresponding to the source excitation.
The entire simulation space is bounded by a perfectly matched layer (PML) of $2\lambda_c$ thickness. 

The distribution of discrete clutter scatterers in the FDTD space follows the homogeneous PPP. For a given clutter density $\rho$, we perform multiple FDTD simulations as shown in Fig.\ref{fig:FDTD_realizations}. 
\begin{figure*}[ht]
    \centering
    \includegraphics[scale=1]{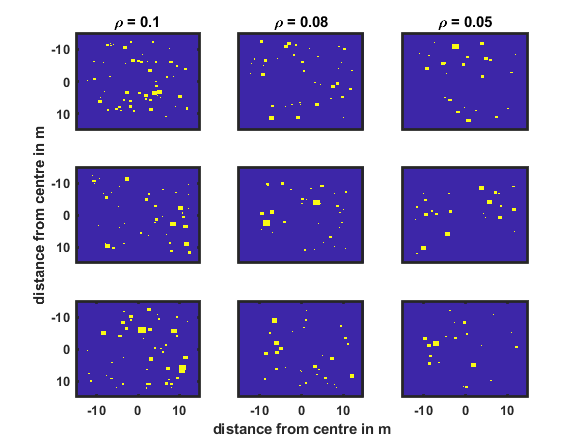}
    \caption{Two-dimensional FDTD simulation space ($20 \times 20$ m) with point clutter scatterers of $\epsilon_r = 7.1$ and mean RCS $\rcsc = 0.8 $ $m^2$ for three different clutter densities: first, second and third columns correspond to $\rho = 0.1, 0.08, 0.05$ $1/m^2$ respectively. The clutter scatterers are distributed uniformly in the simulation space while the RCS of the clutter scatterers follows the Weibull distribution with $\alpha =1$. }
    \label{fig:FDTD_realizations}
\end{figure*}
Each column in the figure shows three FDTD simulations for a specific $\rho$. The number of scatterers in a FDTD realization follows the Poisson distribution where the mean number of scatterers across multiple FDTD simulations is $\rho \times A$ where $A = 400$ $m^2$ is the area of the simulation space. In each FDTD simulation, the scatterers are distributed uniformly about the simulation space except within a radius of $r_f=3m$ around the source. Each scatterer is modelled as an infinitely long dielectric cylinder of dielectric constant $\epsilon_r = 7.1$ and radius $r_g$. 
The RCS, $\sigma_c$, of each of the clutter scatterers is a random variable drawn from the Weibull distribution of mean cross-section, $\rcsc = 0.8$ $m^2$ and shape parameter $\alpha = 1$ (or $2$) as given in \eqref{eq:ClutterRCS}. 
Using modal analysis, the 2D RCS of an infinitely long cylinder is given in terms of Bessel and Hankel functions as shown in 
\begin{equation}
\label{eq:CylinderRCS}
\sigma_c = \frac{16}{k}\left[-\frac{\mathcal{J}_0(kr_g)}{\mathcal{H}_0^1(kr_g)}+\sum_{n=1}^N 2(-1)^{n+1}\frac{\mathcal{J}_n(kr_g)}{\mathcal{H}_n^1(kr_g)}\right]^2,
\end{equation}
where $k=2\pi \sqrt{\epsilon_r}/\lambda$ and $N$ is the number of modes \cite{ruck1970radar}. The unit of the 2D RCS is in meters rather than square-meters (3D).
Based on the above equation, a look up table is formed between $r_g$ and $\sigma_c$ using a sufficiently large value of $N$ (50, in our case) when $\sigma_c$ has converged. Using this look up table, $r_g$ is estimated for each $\sigma_c$ of the scatterer. We further approximate the cylinders to have a square shaped longitudinal cross-section for simplicity. 

The FDTD models the time-domain transverse electric field, $E_z(\vec{r},t)$, in the two-dimensional grid space ($\vec{r}$). Through Fourier transform, we obtain the corresponding electric field in frequency domain, $E_z(\vec{r},f_c)$ for $f_c =$ 1 GHz. A second FDTD simulation is run in free space conditions without the presence of any of the scatterers using the same source excitation. Again, the resulting time-domain free space electric field is Fourier transformed to obtain the corresponding frequency-domain response at all the grid positions ($E_z^{fs}(\vec{r},f_c)$). Then, the two-way path loss, $\mathcal{H}(r)$, at a distance $r$ from the source is obtained from the ratio of the mean of the square of electric field at a distance $r$ from the source and the mean of the square of the corresponding electric field obtained in free space conditions, as shown in
\begin{align}
\label{eq:EstPathLoss}
    \mathcal{H}(r) =  \frac{\lambda_c}{(2\pi)^3 r^2}\left[\frac{ \oint_{\phi=0}^{2\pi}\left|E_z(\vec{r},f_c)^2\right|  d\phi }{\oint_{\phi=0}^{2\pi} \left|E_z^{fs}(\vec{r},f_c)^2 \right| d\phi}\right]^2.
\end{align}
The denominator term essentially normalizes the source excitation in the path loss factor. The power factor of 4 in \eqref{eq:EstPathLoss} accounts for the two-way propagation path and includes the effects of propagation through dielectric scatterers, diffraction about the edges of the scatterers and multipath reflections. 
Note that the path loss factor estimated from this simulation study corresponds to 2D cylindrical waves rather than 3D spherical waves that correspond to the usual radar scenario. Hence, the path loss decays at a rate of $1/m$ (2D) rather than $1/m^2$ (3D). However, in our SG formulations, we are only concerned with the relative power decay from $r_f$ to $r$ and hence the difference in the phase front in 2D and 3D scenarios is ignored. In \eqref{eq:EstPathLoss}, we estimated $\mathcal{H}(r)$ from the mean power decay corresponding to a circle of radius $r$ around the source. This is further averaged across multiple FDTD simulations to obtain a fairly good generalized estimate of $\mathcal{H}(r)$ for any given $\rho$ and $\rcsc$. 
We integrate these path loss estimates with Monte Carlo simulations to experimentally validate the stochastic geometry results.
\subsection{Monte Carlo Simulations}
\label{sec:MonteCarlo}
The SG results are validated through Monte Carlo simulations. The following parameters are fixed: $P_{tx} = 30$ dBm, $N_s =$ 300K, threshold $\gamma$ = 1 and carrier frequency of 1 GHz are fixed. The target cross-section, $\sigma_t$, in each trial is drawn from the Swerling 1 model in \eqref{eq:TargetRCS} with a mean of $\rcst = 0.8$ $m^2$ while the clutter cross-section of each scatterer is varied based on the Weibull model in \eqref{eq:ClutterRCS} with $\alpha = 2$ and $\rcsc = 0.8$ $m^2$. We consider three cases of clutter densities: $\rho = 0.1, 0.08, 0.05$ $/m^2$. For each case, the number of clutter scatterers in each trial is drawn from the Poisson distribution where the mean is $\rho \times A$ where $A$ is $400m^2$. We only consider the clutter returns from those that fall within the same range cell as the target. The size of the range cell is determined from the radar bandwidth $BW$. 
\begin{figure}[ht]
\begin{subfigure}{0.5\textwidth}
\centering
    \includegraphics[width=2in,height=2in]{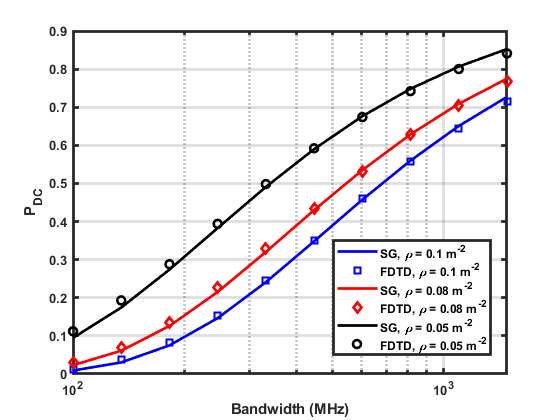}
    \caption{$P_{DC}$ vs. $BW, \rho$ for $\alpha = 1$}
    \label{fig:ExpRes_alpha1}
\end{subfigure}
\\
\begin{subfigure}{0.5\textwidth}
    \centering
    \includegraphics[width=2in,height=2in]{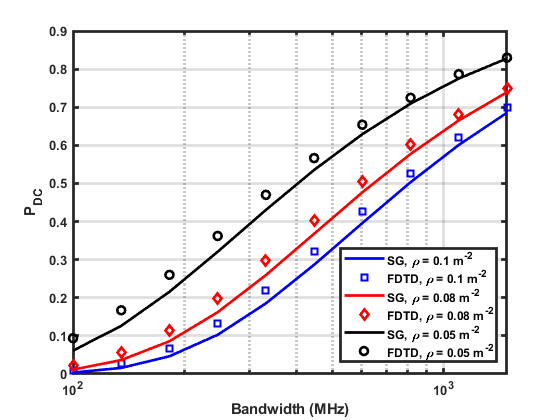}
    \caption{$P_{DC}$ vs. $BW, \rho$ for $\alpha = 2$}
    \label{fig:ExpRes_alpha2}
\end{subfigure}
\caption{FDTD and SG-based estimations of $P_{DC}$ for different $BW$ and $\rho$ when $P_{tx} = 30$ dBm, $N_s = 35$ K, $\rcst = \rcsc = 0.8$ $m^2$, $q=2$ for SG.}
\label{fig:FDTD_SG}
\end{figure}
The path loss factor for both target and clutter scatterers at any position $\vec{r}$ is obtained from the FDTD simulations described above. A total of 10000 trials are conducted to obtain $S$ and $C$ based on \eqref{eq:RxPowSig} and \eqref{eq:TotalClutter} respectively. The mean $P_{DC}$ is computed based on definition 1 and compared with the stochastic geometry solutions obtained from \eqref{eq:theorem1}. The $P_{DC}$ is plotted for different $BW$ and $\rho$. The results are presented in Fig.\ref{fig:FDTD_SG} for $\alpha=1$ and $\alpha=2$. The result shows an exact match between FDTD and SG results for $\alpha=1$ (Fig.\ref{fig:ExpRes_alpha1}) and a very close match when $\alpha=2$ (Fig. \ref{fig:ExpRes_alpha2}). In each figure, we show the values for the different cases of $\rho$. The results show that the SG framework is able to accurately estimate the radar's detection performance without the computational complexity of the FDTD framework. Note that the FDTD results have been confined to a region of $20 \times 20$m for a source of 1 GHz. The computation complexity would considerably increase for greater $r_t$ and higher source frequencies due to smaller grid dimensions.
The small variation in the results in Fig.\ref{fig:ExpRes_alpha2} may be attributed to the fact that the FDTD simulations do not have a circular cross-section as assumed. Second, we have assumed a path loss factor of $q=2$ in the SG formulations whereas the actual propagation conditions may have a slightly different $q$ due to multipath scattering. Third, the FDTD simulation is inherently deterministic. We estimate the \emph{mean} path loss factor by taking the average of a radius $r$ around the source. We also average across three simulations. However, the resulting estimate of the path loss factor may still not capture the stochasticity inherent in the radar propagation conditions. 
\section{Results}
\label{sec:Results}
In this section, we present the results obtained from the two theorems and their corollaries. The first theorem shows the relationship between the newly proposed metric - radar detection coverage probability - and radar, target and clutter parameters. 
\begin{table}[ht]
    \caption{Radar, Clutter and Target Parameters for Stochastic Geometry Formulations}
    \label{tab:parameters}
    \centering
    \begin{tabular}{c|c}
    \hline \hline 
    Radar Parameters & Values \\
    \hline
    Carrier frequency ($f_c$) & 1 GHz\\
    Transmitted power ($P_{tx}$) & 0 to 30 dBm\\
    Noise temperature ($N_s$) & 300 K to 1500 K\\
    Radar bandwidth ($BW$) &  0.1 to 2 GHz\\
    Antenna gain ($G$) & 0 dBi \\
    \hline
    Target Parameters & Values \\
    \hline
    Average target RCS ($\rcst$) & 1 $m^2$\\
    Target distance ($r_t$) & 5 to 50m\\
    \hline
    Clutter Parameters & Values \\
    \hline
    Average clutter cross-section ($\rcsc$) & 0.5 to 5 $m^2$ \\
    Weibull clutter  shape parameter ($\alpha$) &  1 to 2 (no units) \\
    Clutter density ($\rho$) & 0.003 $1/m^2$\\
    Path loss exponent ($q$) & 2 to 4 (no units)\\
    \hline
    SCNR threshold ($\gamma$) & 1 \\  
    \hline \hline
    \end{tabular}
\end{table}
The second theorem shows the optimum radar bandwidth under both noisy and cluttered conditions. In all of the cases, the radar antennas are assumed to be isotropic ($G_{rx}=G_{tx} = 0$) dBi with a minimum far-field distance of $r_f = 3$ m, while the carrier frequency is fixed at 1 GHz. The mean target cross section, $\rcst$, is assumed to be 1 $m^2$. All the radar parameters - $P_{tx}$, $N_s$ and $BW$ - can vary and the ranges of their variation are listed in Table.\ref{tab:parameters} along with target and clutter parameters.

\subsection{Analysis of Results from Theorem 1}
First, we present $P_{DC}$ as a function of target distance ($r_t$) from the radar for different types of clutter in Fig.\ref{fig:PdcVsRt_alphaparam}. 
\begin{figure*}[ht]
\begin{subfigure}{.3\textwidth}
\centering
    \includegraphics[width=2in,height=2in]{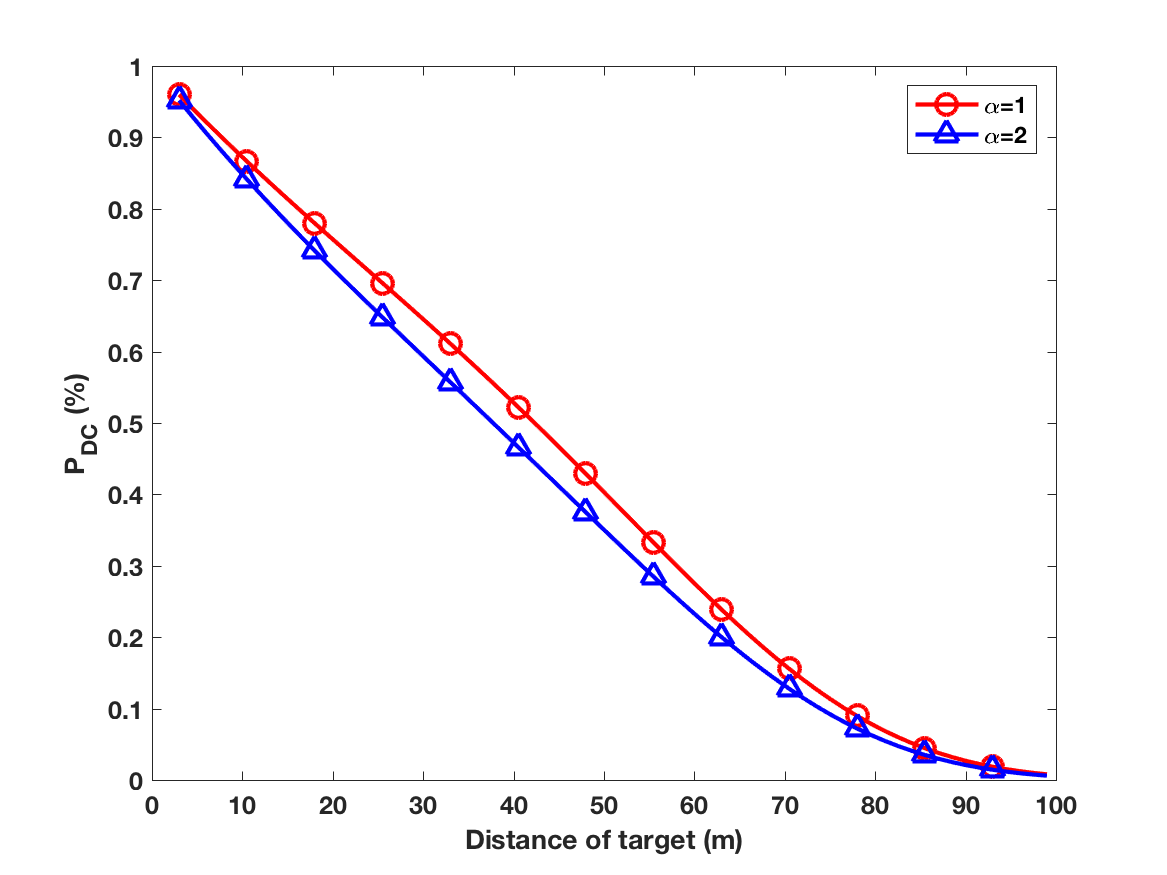}
    \caption{$P_{DC}$ vs. $r_t, \alpha$}
    \label{fig:PdcVsRt_alphaparam}
\end{subfigure}
\begin{subfigure}{.3\textwidth}
    \centering
    \includegraphics[width=2in,height=2in]{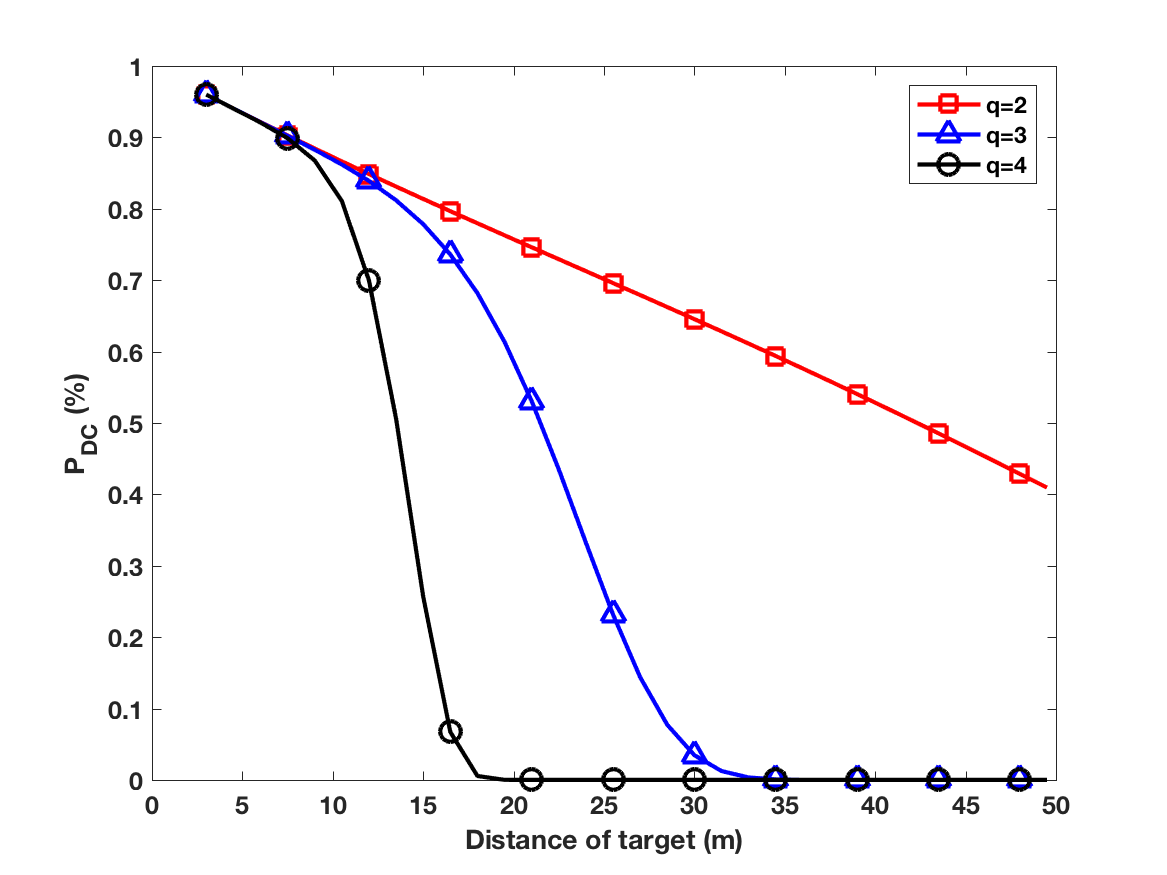}
    \caption{$P_{DC}$ vs. $r_t, q$, $\alpha = 1$}
    \label{fig:PdcVsRt_qparam}
\end{subfigure}
\begin{subfigure}{.3\textwidth}
    \centering
    \includegraphics[width=2in,height=2in]{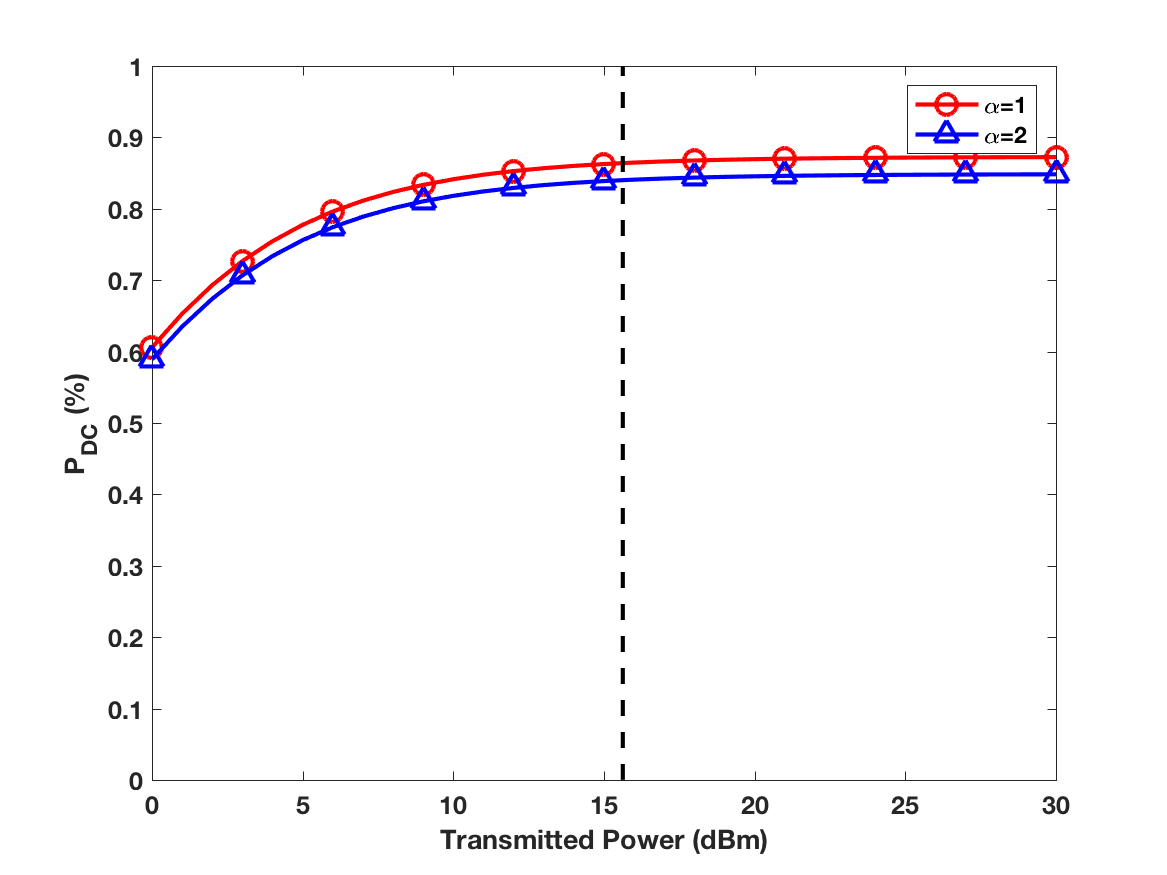}
    \caption{$P_{DC}$ vs. $P_{tx}, \alpha$}
    \label{fig:PdcVsPtx_alphaparam}
\end{subfigure}\\
\begin{subfigure}{.3\textwidth}
\centering
    \includegraphics[width=2in,height=2in]{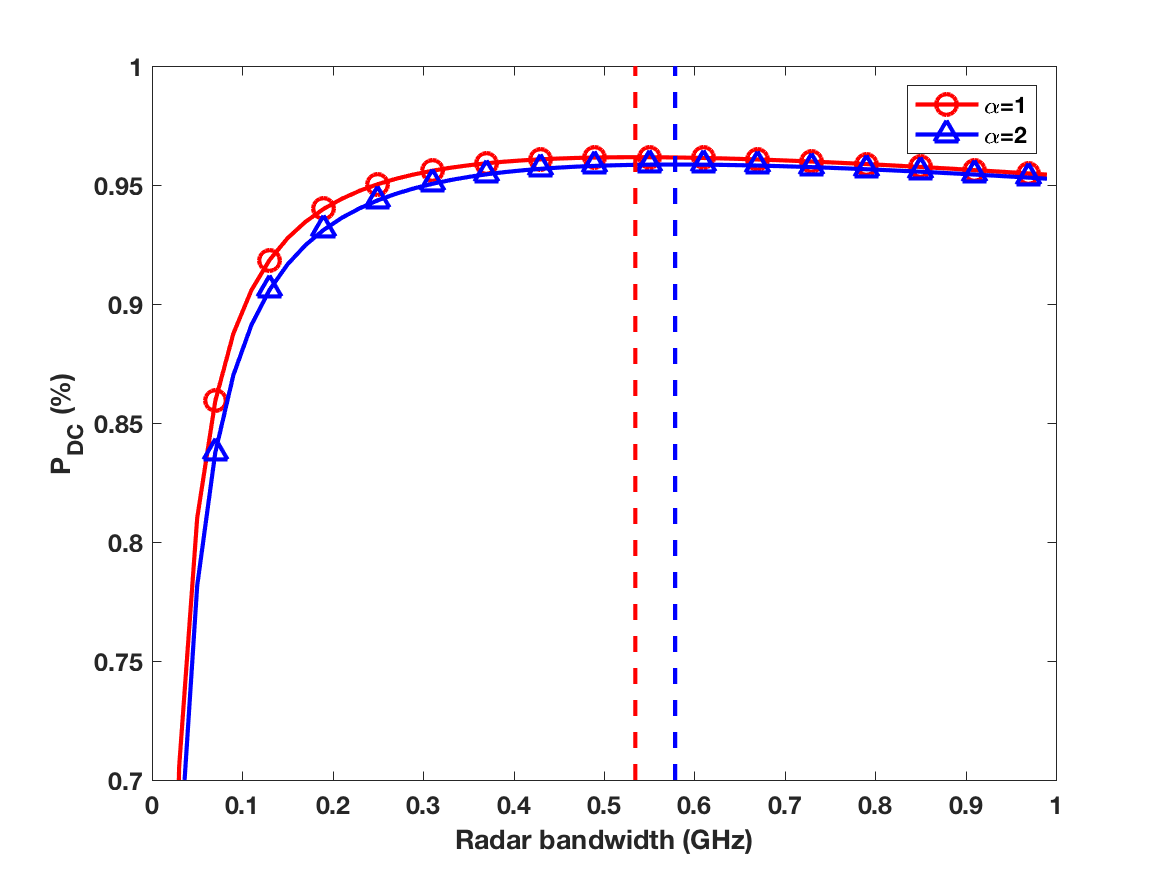}
    \caption{$P_{DC}$ vs. $BW,  \alpha$, $N_s = 1500$ K,}
    \label{fig:PdcVsBW_alphaparam}
\end{subfigure}
\begin{subfigure}{.3\textwidth}
    \centering
    \includegraphics[width=2in,height=2in]{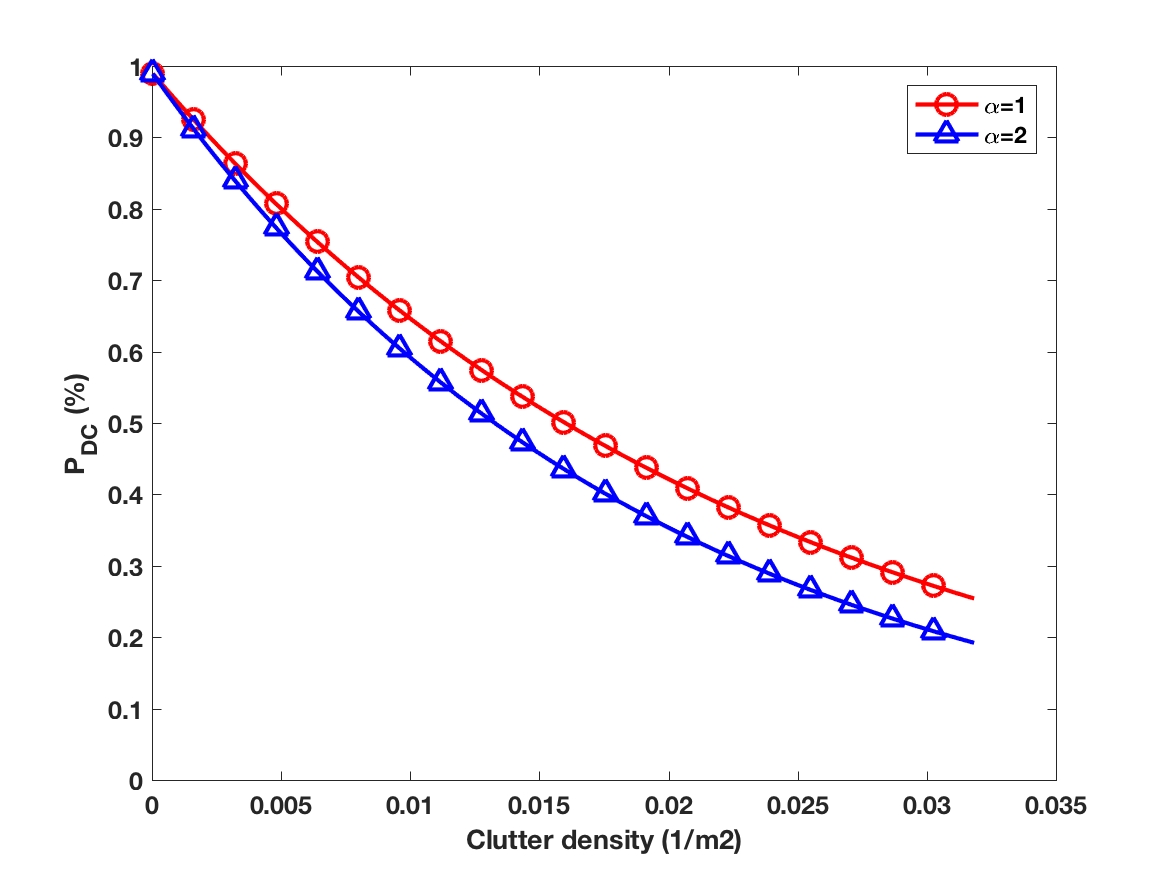}
    \caption{$P_{DC}$ vs. $\rho, \alpha$}
    \label{fig:PdcVsRho_alphaparam}
\end{subfigure}
\begin{subfigure}{.3\textwidth}
    \centering
    \includegraphics[width=2in,height=2in]{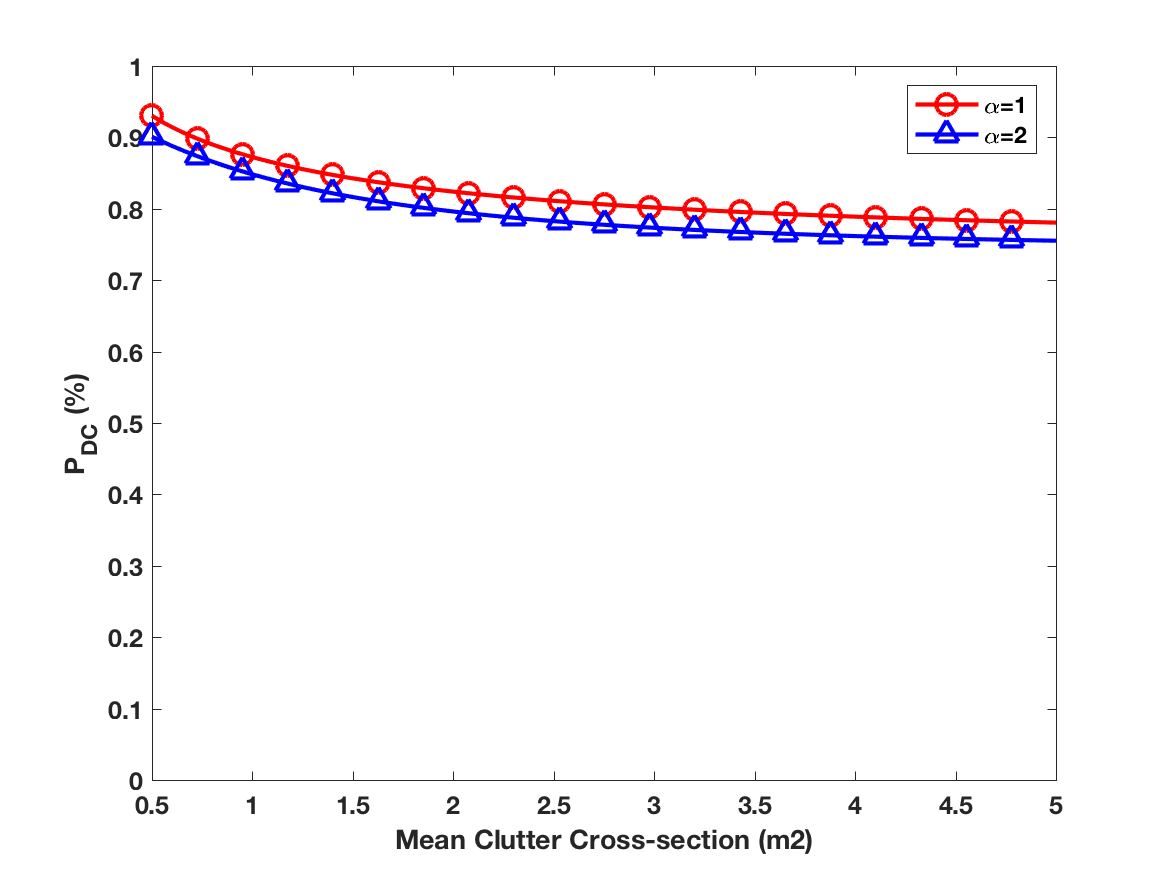}
    \caption{$P_{DC}$ vs. $\rcsc, \alpha$}
    \label{fig:PdcVssigmacavg_alphaparam}
\end{subfigure}
\caption{Theorem 1 results: Variation of $P_{DC}$ for different parameters when $r_f=3$ m, $\rcst = 1$ $m^2$, $\rho = 0.003$ $/m^2$. $P_{tx} = 30$ dBm for all figures except (c). $q = 2$ in all figures except (b). $r_t = 10$ m for figures (c)-(f). $BW = 0.1$ GHz, $N_s = 300$ K, in all figures except (d). $\rcsc = 1$ $m^2$ in all figures except (f).}
\label{fig:theorem1results}
\end{figure*}
For this case, $P_{tx}$ is 30 dBm, $N_s$ is 300 K and $BW$ is 0.1 GHz. Both $\rcsc = \rcst = 1 m^2$ while $q = 2$. We plot the variation of $P_{DC}$ for different values of Weibull shape parameter, $\alpha$. When $\alpha = 1$, the RCS of discrete clutter scatterers are of exponential distribution while $\alpha = 2$ corresponds to scatterers with RCS of Rayleigh distribution.  We observe that $P_{DC}$ falls with increase in $r_t$. This is because the increase in path loss term $\hxt$ results in the fall of the SNR (first exponential term) in \eqref{eq:theorem1}. The SCR in the second exponential term is less impacted by $\hxt$ since the clutter scatterers that fall within the same range cell as the target are similarly impacted by path loss. There is a slight increase in the clutter returns due to increase in clutter area size ($2 \pi r_t \Delta r$).
The detection performance deteriorates slightly for higher values of $\alpha$. The exponential distribution provides the upper bound of the performance while the Rayleigh distribution provides the lower bound. 

Next, we use the same parameters that were used in the previous case. However, this time $\alpha$ is fixed at 1 while $q$ is varied from 2 to 4 in Fig.\ref{fig:PdcVsRt_qparam}.
Higher values of $q$ incorporate multipath scattering effects into the path loss function $\mathcal{H}(\cdot)$. As expected, higher $q$ results in significant deterioration in the radar detection performance. 

Next, we examine the effect of increasing $P_{tx}$ on $P_{DC}$ in Fig.\ref{fig:PdcVsPtx_alphaparam} for different values of $\alpha$. Here, the distance of the target from the radar is fixed at $r_t =$10 m. $N_s$ and $BW$ are fixed at 300 K and 0.1 GHz respectively. Again, as before, $\rcsc = \rcst = 1 m^2$ while $q$ is 2. 
As mentioned earlier, the first term in \eqref{eq:theorem1} shows the effect of SNR on the radar detection performance. Here, as the $P_{tx}$ increases, the SNR improves resulting in higher detection performance. However, as pointed out in Corollary \ref{corr:PtxConvergence}, when we enter the clutter limited scenario, increase in $P_{tx}$ does not improve the performance of the radar since clutter returns proportionately increase. As a result, $P_{DC}$ converges. The dotted line in the figure shows the $P_{tx}$ value for which the $P_{DC}$ converges. This was derived from the analytical expression in corollary \ref{corr:PtxConvergence} in \eqref{eq:PtxEst}. Note that the maximum power is independent of $\alpha$ and other clutter related terms. 

Next, we run a similar study where we analyze the effect of the radar bandwidth on $P_{DC}$. The results are presented in Fig.\ref{fig:PdcVsBW_alphaparam} for different $\alpha$. In this scenario, $P_{tx}$ is 30 dBm, $N_s$ is 1500K while $r_t$ is 10m and $q$ is 2. All the other parameters are fixed as before except for $BW$ which is varied up to 1 GHz.
In a clutter limited scenario, higher $BW$ results in smaller range cells resulting in fewer clutter scatterers for a fixed clutter density.  Hence, we have higher signal-to-clutter (SCR) levels and $P_{DC}$. On the other hand, in a noise limited scenario, increase in $BW$ lowers the SNR and deteriorates the $P_{DC}$. Therefore, in the figure, we observe an optimum bandwidth where we obtain the peak $P_{DC}$. The two dotted lines indicate the optimum bandwidth derived from Theorem 2 \eqref{eq:theorem2}, for $\alpha = 1$ (red left line) and $\alpha = 2$ (blue right line). This result shows the correspondence between theorems 1 and 2. The result shows $P_{DC}$ improving with increase in BW (due to fall in noise) up to the optimum value after which the $P_{DC}$ falls slightly for higher values of $BW$ (due to fall in clutter returns). 

Figure.\ref{fig:PdcVsRho_alphaparam} shows the radar detection performance as a function of the clutter density ($\rho$) for a fixed target distance at $r_t=10m$. All the other radar parameters such as noise, bandwidth and transmitted power are fixed. We observe that, as expected, the increase in clutter density causes a fall in the SCR term in the second exponential term of \eqref{eq:theorem1}. 
Finally, we consider the effect of $\rcsc$ on $P_{DC}$ in Fig.\ref{fig:PdcVssigmacavg_alphaparam}. Based on the figure we observe that the $P_{DC}$ does fall up to a point with increase in $
\rcsc$. However, it asymptotically converges after a point. In other words, the increase in $\rcsc$ does not affect $P_{DC}$ beyond a point. This also concurs with the Corollaries \ref{corr:rcscConvergence1} and \ref{corr:rcscConvergence2}.

\subsection{Analysis of Results from Theorem 2}
Now, we discuss the results obtained from the second theorem based on Fig.\ref{fig:Theorem2results}. A radar's SNR falls with increase in radar bandwidth due to noise. But the SCR improves due to smaller range cells and fewer clutter scatterers. The optimum bandwidth is identified by Theorem 2 by considering both.  

First, we show how the optimum radar bandwidth derived from \eqref{eq:theorem2} varies as a function of target distance $r_t$ in Fig.\ref{fig:BWVsRt_alphaparam}. 
\begin{figure*}
\begin{subfigure}{.3\textwidth}
\centering
    \includegraphics[width=2in,height=2in]{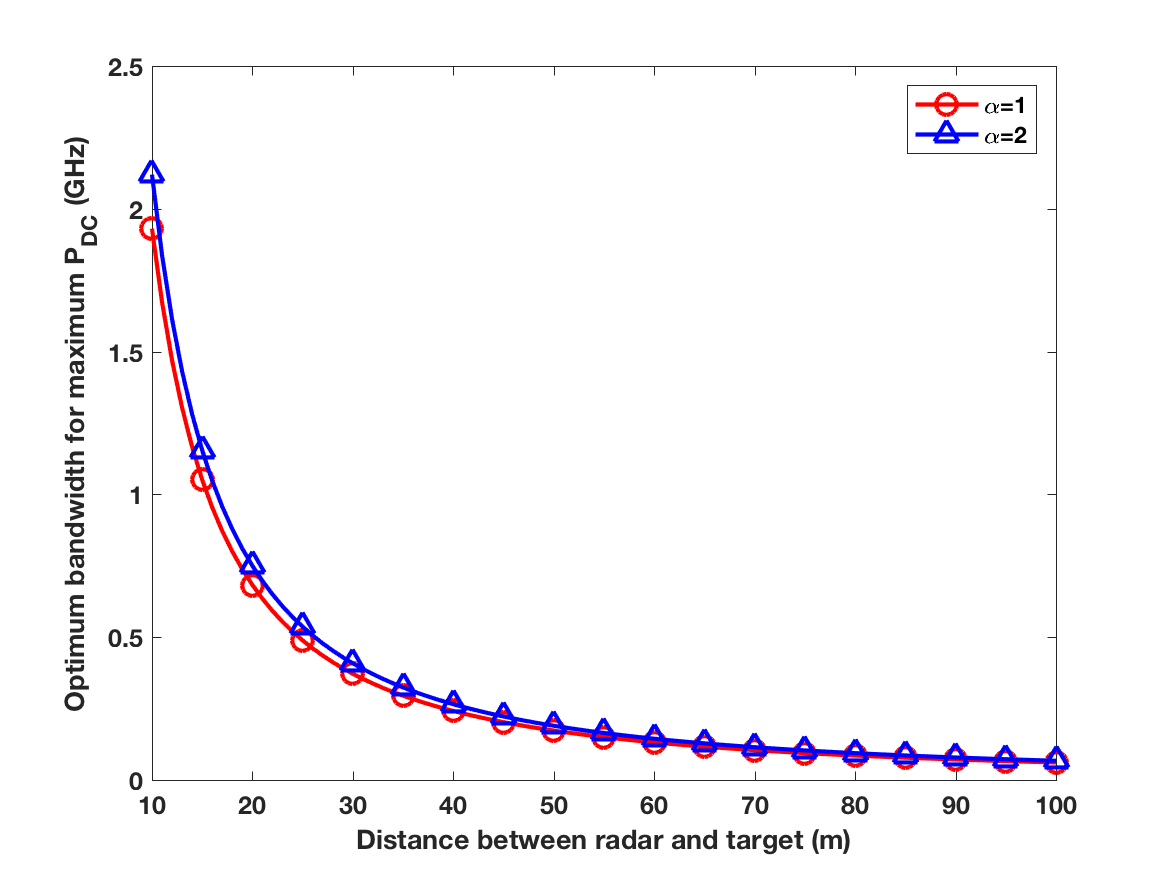}
    \caption{$BW$ vs. $r_t,\alpha$.}
    \label{fig:BWVsRt_alphaparam}
\end{subfigure}
\begin{subfigure}{.3\textwidth}
\centering
    \includegraphics[width=2in,height=2in]{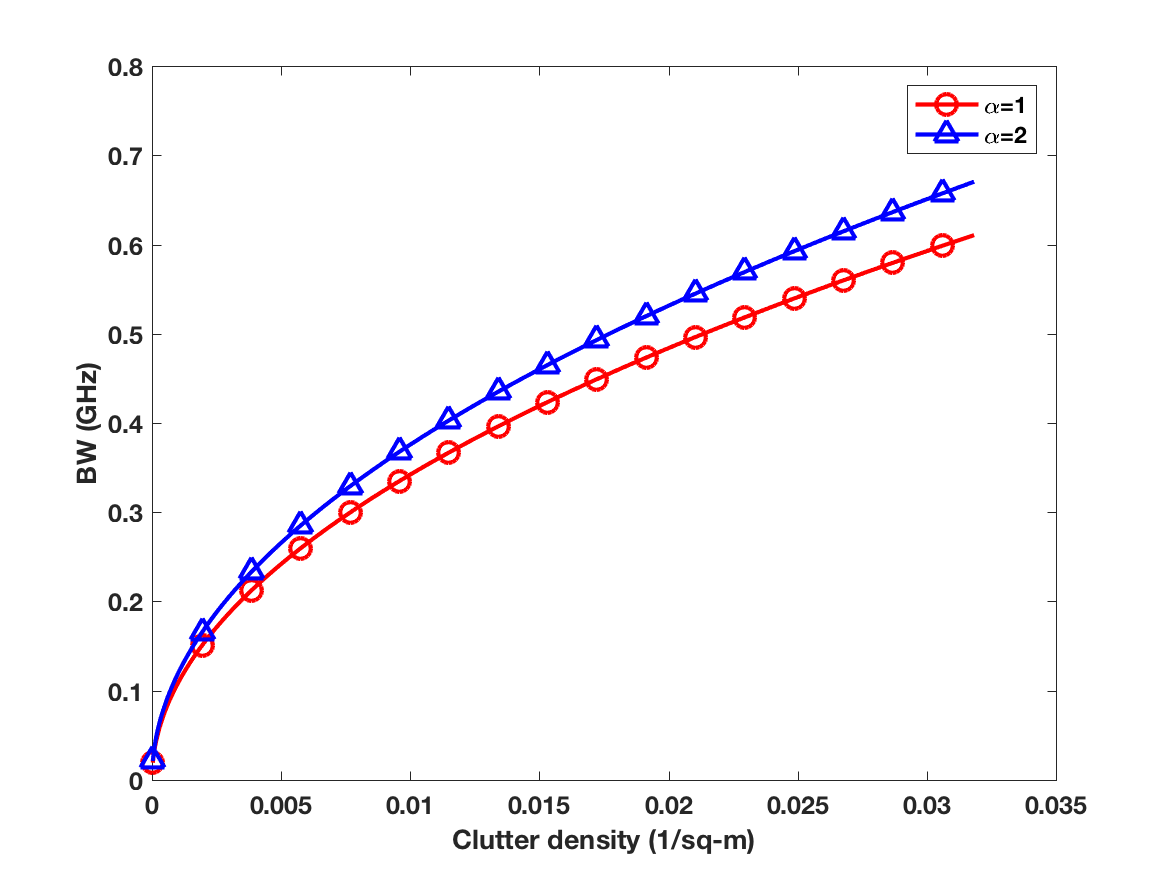}
    \caption{$BW$ vs. $\rho, \alpha$.}
    \label{fig:BWVsRho_alphaparam}
\end{subfigure}
\begin{subfigure}{.3\textwidth}
    \centering
    \includegraphics[width=2in,height=2in]{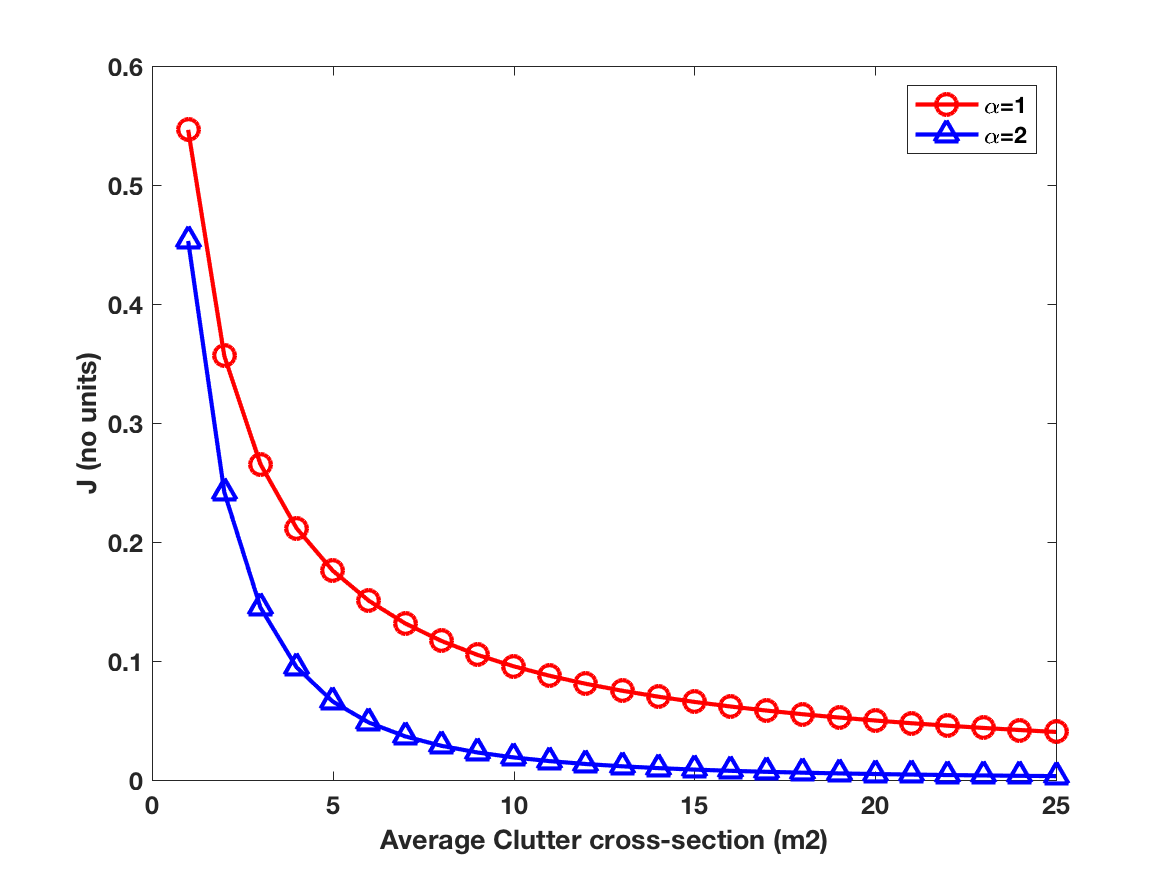}
    \caption{$J$ vs. $\rcsc, \alpha$}
    \label{fig:JVssigmacavg_alphaparam}
\end{subfigure}
\\
\begin{subfigure}{.3\textwidth}
    \centering
    \includegraphics[width=2in,height=2in]{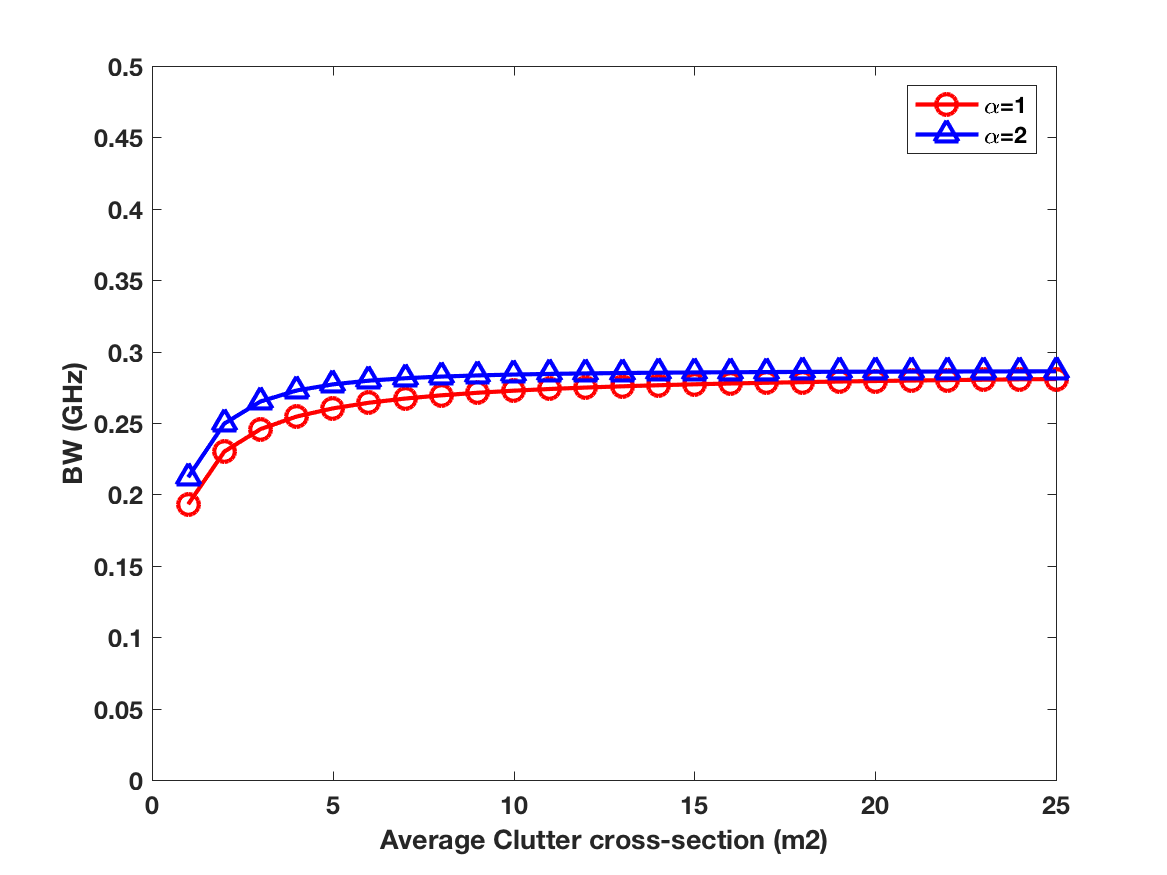}
    \caption{$BW$ versus $\rcsc, \alpha$}
    \label{fig:BWVssigmacavg_alphaparam}
\end{subfigure}
\begin{subfigure}{.3\textwidth}
    \centering
    \includegraphics[width=2in,height=2in]{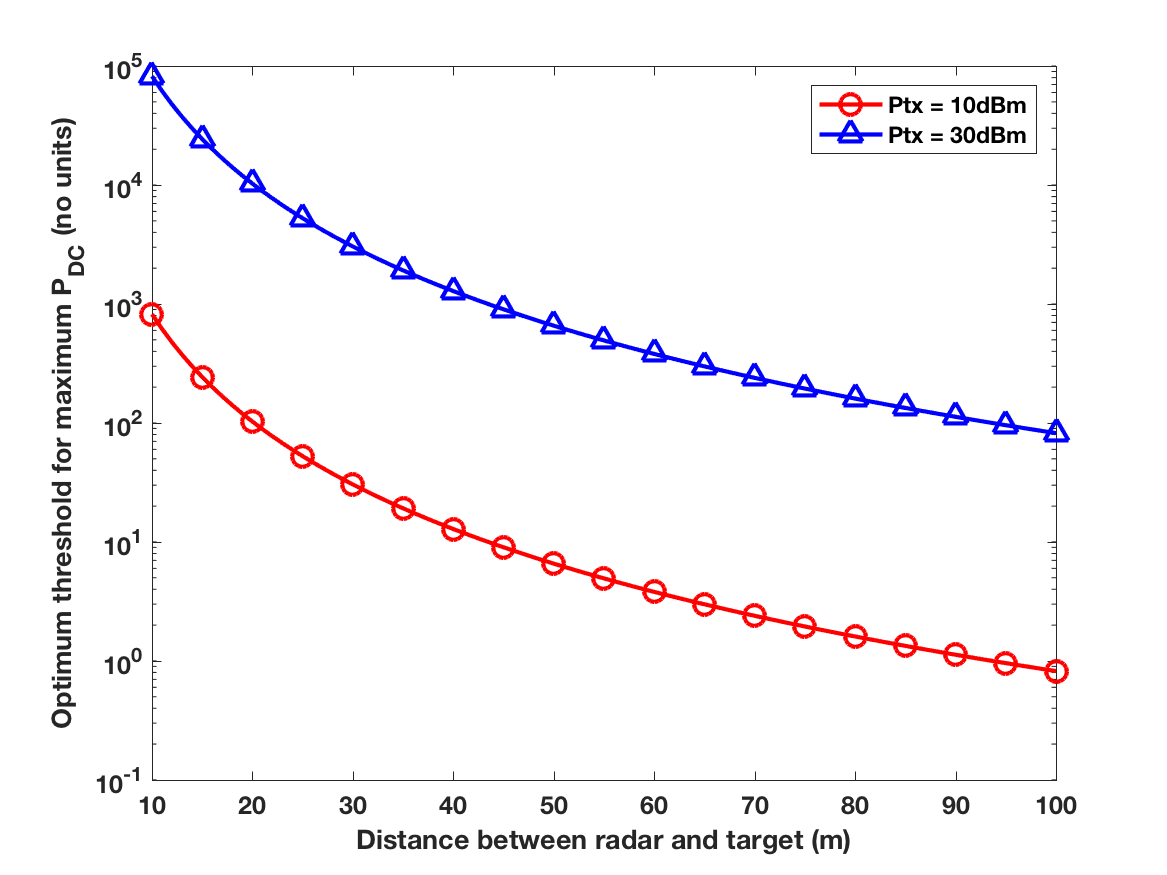}
    \caption{$\gamma$ vs. $r_t, P_{tx}$}
    \label{fig:gammaVsrt_Ptxparam}
\end{subfigure}
\begin{subfigure}{.3\textwidth}
    \centering
    \includegraphics[width=2in,height=2in]{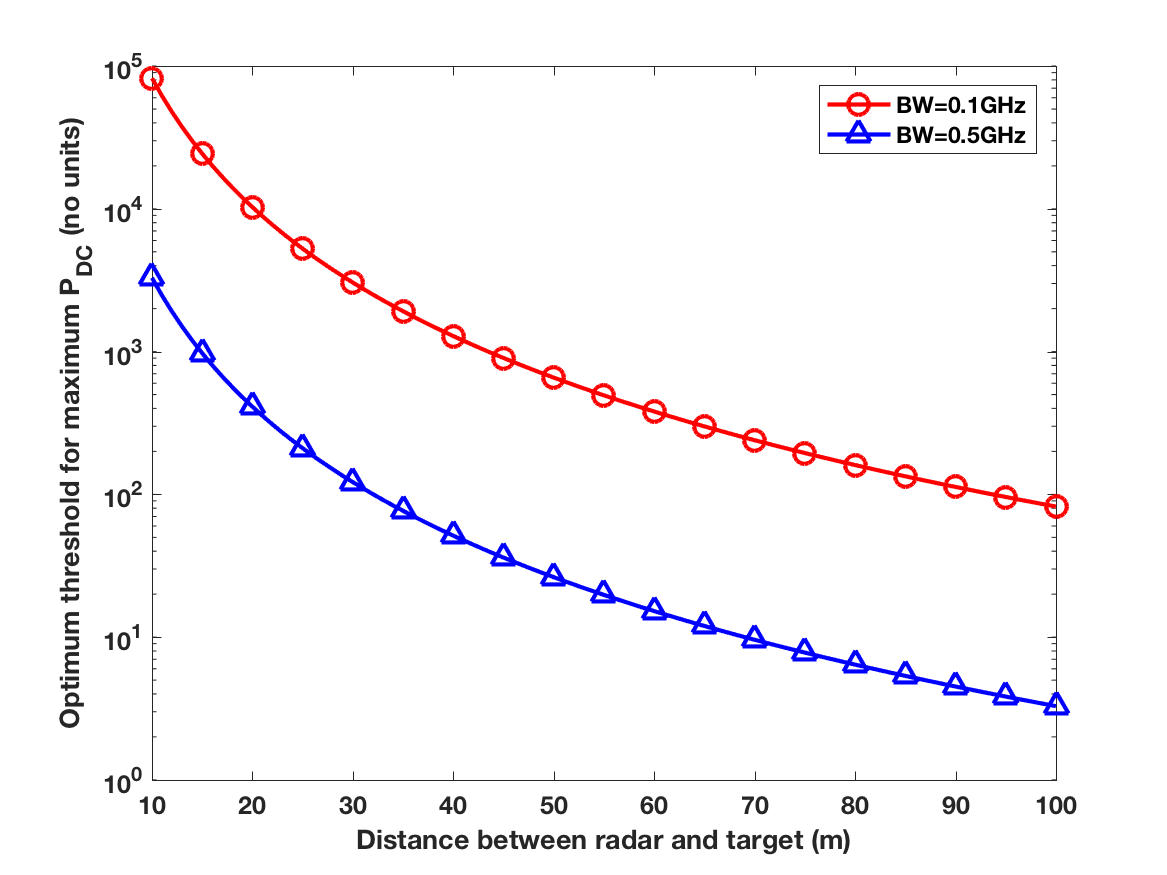}
    \caption{$\gamma$ versus $r_t, P_{tx}$}
    \label{fig:gammaVsrt_BWparam}
\end{subfigure}
\caption{Theorem 2 results: Optimum bandwidth for obtaining maximum $P_{DC}$ when $N_s = 300$ K, $q = 2$, $\rcst = 1$ $m^2$. $\rho$ fixed at $0.003$ $1/m^2$ in all cases except (b). $\gamma$ fixed at 1 in all cases except (e) and (f).}
\label{fig:Theorem2results}
\end{figure*}
Here, $P_{tx}$, $N_s$, $\rho$ and $q$ are fixed at 30 dBm, 300 K, 0.003 $1/m^2$ and 2 respectively. As we go farther, we observe that the optimum bandwidth reduces till it begins to converge. However, the variation in bandwidth is not considerable for high values of ranges. Therefore, it may be fairly easy to select an optimum bandwidth based on half the maximum unambiguous range of the radar field of view. The variation with respect to the clutter shape parameter is not very significant.

In Fig.\ref{fig:BWVsRho_alphaparam}, we observe that the BW is far more sensitive to the clutter density $\rho$. Here, all the other parameters except $\rho$ are fixed with the same values as the previous case. Higher clutter density requires a higher bandwidth or a smaller range cell size in order to balance the increase in noise. The returns are sensitive to the Weibull shape parameter. Slightly higher bandwidths are required for the Rayleigh distribution ($\alpha =2$) compared to the exponential distribution ($\alpha = 1$). 

In corollary \ref{corr:rcscConvergence1}, we observed that $J$ becomes negligible for higher values of $\rcsc$. This is also observed in Fig.\ref{fig:JVssigmacavg_alphaparam}. As a result, we observe how the optimum bandwidth for $\alpha =1$ and $\alpha = 2$ begin to converge to the same value in Fig.\ref{fig:BWVssigmacavg_alphaparam}. This supports the corollary \ref{corr:optGamma} where we noted that the optimum bandwidth becomes independent of the type of clutter distribution for high $\rcsc$. 

As a result, we can tune the threshold function $\gamma$ for a given transmitted power $P_{tx}$ and radar bandwidth $BW$ for obtaining the maximum $P_{DC}$ while tracking a target at a specific target distance of $r_t$. Fig.\ref{fig:gammaVsrt_Ptxparam} shows the ideal $\gamma$ as a function of $r_t$ for different $P_{tx}$ while keeping all other parameters fixed. We observe that $\gamma$ has to be reduced as we go farther from the radar to compensate for the path loss factor. Greater transmitted powers result in higher $\gamma$ as expected. Fig.\ref{fig:gammaVsrt_BWparam} shows how to tune $\gamma$ for different bandwidths. Higher bandwidths require lower $\gamma$. 

\section{Conclusion}
\label{sec:Conclusion}
We derived a metric - radar detection coverage probability ($P_{DC}$) - using SG techniques for evaluating the radar's detection performance under generalized discrete clutter conditions. The point clutter distribution were modeled as a homogeneous Poisson point process. The RCS of the target and clutter scatterers were modelled as random variables of Swerling-1 and Weibull distributions respectively. We evaluated the radar's performance for different radar, target and clutter parameters. 
Our studies provide several insights into the performance of the radar under noisy and cluttered conditions. We list some of these below:
\begin{itemize}
\item The radar detection performance is best for the exponential distribution of the clutter cross-section (when Weibull shape parameter $\alpha = 1$) and worst for the Rayleigh distribution ($\alpha = 1$) for similar clutter densities and mean clutter RCS.
\item Increase in radar transmitted power improves the detection performance in noise limited scenarios. However, beyond a point there is no further discernible improvement since we enter the clutter limited scenario. We provide the analytical method to estimate the peak transmitted power at which the radar detection performance reaches its asymptotic maximum.
\item Large radar bandwidths result in increase in noise but lesser clutter returns due to smaller range cell size. We derive the optimum radar bandwidth for maximizing $P_{DC}$ under noisy and cluttered conditions. 
\item We show a method for optimizing the detection threshold for maximizing $P_{DC}$ for a fixed transmitted power and bandwidth. 
\end{itemize} 
Our results are experimentally validated with a hybrid of FDTD and Monte Carlo simulations. 
The FDTD simulations are used to model the path loss between the radar and a scatterer while the Monte Carlo simulations consider the diversity in the RCS of the target and clutter scatterers.  
\bibliographystyle{ieeetran}
\bibliography{main}
\end{document}